            \DeclareFontFamily{U}{wncy}{}
            \DeclareFontShape{U}{wncy}{m}{n}{%
               <5>wncyr5%
               <6>wncyr6%
               <7>wncyr7%
               <8>wncyr8%
               <9>wncyr9%
               <10>wncyr10%
               <11>wncyr10%
               <12>wncyr6%
               <14>wncyr7%
               <17>wncyr8%
               <20>wncyr10%
               <25>wncyr10}{}
\newtheorem{thm}{Theorem}[section]
\newtheorem{lem}[thm]{Lemma}
\newtheorem{cor}[thm]{Corollary}
\newtheorem{prop}[thm]{Proposition}
\theoremstyle{definition}
\newtheorem*{dfn}{Definition}
\newtheorem{remark}{Remark}[section]
\newtheorem{example}{Example}[section]
\newcommand{\Z}{\mathbb Z}
\newcommand{\F}{\mathbb F}
\def\al{\alpha}
\def\s{\sigma}
\def\d{\delta}
\def\la{\lambda}
\def\be{\beta}
\begin{document}

\title[]{Matrix-Product Codes over Commutative Rings and Constructions Arising from $(\sigma,\delta)$-Codes}

\author{Mhammed Boulagouaz}
\address[M. Boulagouaz]{Department of Mathematics, Faculty of Sciences and Technologies, University of Sidi Mohamed Ben Abdellah, B.P. 2202, Fes, Morocco} \email{boulag@rocketmail.com}

\author{Abdulaziz Deajim}
\address[A. Deajim]{Department of Mathematics, King Khalid University,
P.O. Box 9004, Abha, Saudi Arabia} \email{deajim@kku.edu.sa, deajim@gmail.com}

\keywords{matrix-product code, $(\s, \d)$-code, commutative ring}
\subjclass[2010]{94B05, 94B15, 16S36}
\date{\today}

\begin {abstract}
A well-known lower bound (over finite fields and some special finite commutative rings) on the Hamming distance of a matrix-product code (MPC) is shown to remain valid over any commutative ring $R$. A sufficient condition is given, as well, for such a bound to be sharp. It is also shown that an MPC is free when its input codes are all free, in which case a generating matrix is given. If $R$ is finite, a sufficient condition is provided for the dual of an MPC to be an MPC, a generating matrix for such a dual is given, and characterizations of LCD, self-dual, and self-orthogonal MPCs are presented. Finally, results of this paper are used along with previous results of the authors to construct novel MPCs arising from $(\sigma, \delta)$-codes. Some properties of such constructions are also studied.
\end {abstract}

\maketitle

\section{{\bf Introduction}}
In this work, if not otherwise specified, $R$ denotes a  commutative ring with identity and $U(R)$ the multiplicative group of all invertible elements of $R$. A non-empty subset $C$ of the free $R$-module $R^n$ is called a code over $R$ of length $n$ and an element of $C$ is called a
codeword. If $C$ is an $R$-submodule of $R^n$, then $C$ is called a linear code over $R$. The $R$-submodule of $R^n$ generated by a code in $R^n$ is obviously a linear code over $R$, so all codes considered in this paper are linear. If $C$ is a free $R$-submodule of $R^n$ of rank $k$ (i.e. $C$ has an $R$-basis whose cardinality is $k$), then $C$ is called a free linear code over $R$ of rank $k$, and we express this by saying that $C$ is an $[n,k]$-linear code over $R$. If $C$ is an $[n,k]$-linear code over $R$, we say that a matrix $G \in M_{k, n}(R)$ is a  generating matrix of $C$ if the rows of $G$ form an $R$-basis of $C$. We, thus, write $C= \{x\,G\,|\, x\in R^{k}\}$.

On the free $R$-module $R^n$, consider the (Euclidean) bilinear form $<.\,,. > \,:R^n \times R^n \to R$ defined by $<(x_1, \dots, x_n),(y_1, \dots, y_n)>\,=\sum_{i=1}^n x_iy_i$. With respect to this bilinear form, define the dual $C^\perp$ of any code $C$ over $R$ by $C^\perp=\{x\in R^n\,|\, \mbox{$<x,y>\,=0$ for all $y\in C$}\}$. It is easy to check that $C^\perp$ is a linear code over $R$ if $C$ is. When $C \subseteq C^{\bot}$ (resp. $C = C^{\bot}$), we say that $C$ is self-orthogonal (resp. self-dual). A code $C$ is called linear complementary dual (LCD for short) if $C\cap C^\perp=\{0\}$ (see \cite{M}).

Let $C_i$ be an $[n,k_i]$-linear code over $R$, for $i = 1,\dots, s$. Writing codewords of the codes $C_i$ in column form, let $(c_1 \dots c_s)$ be the $n\times s$ matrix whose columns are $c_1 \in C_1,\dots, c_s\in C_s$. Consider the following subset of the set $M_{n\times s}(R)$ of $n\times s$ matrices with entries in $R$:
$$[C_1 \dots C_s] = \{(c_1 \dots c_s) \,|\, c_i\in C_i,\, i=1, \dots, s \}.$$
For $s\leq l$ and a matrix $A\in M_{s\times l}(R)$,
define the matrix-product code associated to
$C_{1}, \dots, C_{s}$ and $A$ to be
$$[C_1 \dots C_s]\, A:= \{(c_1 \dots c_s)A \,|\, c_i\in C_i,\, i=1, \dots, s \}\subseteq M_{n\times l}(R).$$
As the three $R$-modules $M_{n\times l}(R)$, $R^{nl}$, and $(R^n)^l$ are isomorphic, $[C_1 \dots C_s]\,A$ can be thought of as a code of length $nl$ over $R$ in an obvious way, and we can look at codewords of $[C_1 \dots C_s]\,A$ as elements of either of these three modules. More specifically, if $A=(a_{ij})$ and $c_i=(x_{1i},\dots,x_{ni})\in C_i$ for $i=1, \dots, s$, then, the codeword $(c_1 \dots c_s)\,A$ of $[C_1 \dots C_s]\,A$ is the matrix:
  $$\left(\begin{array}{ccc}
 \sum_{i=1}^s x_{1i} a_{i1} &  \cdots & \sum_{i=1}^s x_{1i} a_{il} \\
  \sum_{i=1}^s x_{2i} a_{i1} & \cdots & \sum_{i=1}^s x_{2i} a_{il} \\
  \vdots & \cdots & \vdots \\
  \sum_{i=1}^s x_{ni} a_{i1} & \cdots & \sum_{i=1}^s x_{ni} a_{il} \\
\end{array}\right)\in M_{n\times l}(R).$$
This matrix can be identified with the its corresponding element of $R^{nl}$, so the codeword $(c_1 \dots c_s)\,A$ can be looked at as the following element:
$$(\sum_{i=1}^s x_{1i} a_{i1}, \dots, \sum_{i=1}^s x_{1i} a_{il}, \dots, \sum_{i=1}^s x_{ni} a_{i1}, \dots, \sum_{i=1}^s x_{ni} a_{il})\in R^{nl}.$$
On the other hand, as the $k$th column of the above matrix is $\sum_{i=1}^s a_{ik}\,c_i\in R^{ n}$, the codeword $(c_1\dots c_s)\,A$ can be looked at as the following $l$-tuple with coordinates from $R^n$:
$$(\sum_{i=1}^s a_{i1}\,c_i, \dots, \sum_{i=1}^s a_{il}\,c_i)\in (R^n)^l.$$
It should be noted that $[C_1 \dots, C_s]\,A$ is linear if all $C_i$ are linear.

In general, some of the serious differences between linear codes over fields versus linear codes over commutative rings are apparent from the following:

1. A linear code $C\subseteq R^n$ may not be free.

2. Even when a code $C$ is free over $R$, its dual $C^\perp$ may not be free.

3. If $C$ and $C^\perp$  are both free codes over $R$ of length $n$, the equality $\mbox{rank}(C) +  \mbox{rank}(C^\perp)=n$ may not hold.

Due to point 1 above, it is not possible to talk about a generating matrix of a non-free code according to the definition of such a matrix we have given. With respect to points 2 and 3 above, it follows from \cite[Proposition 2.9]{FLL} that if $R $ is a finite commutative ring and $C$ is an $[n,k]$-linear code over $R$, then $C^\perp$ is an $[n,n-k]$-linear code over $R$. So, in most of our results, we work over finite commutative rings. Nonetheless, some results in this paper do not require finiteness of $R$, so we present them with no such restriction.

In Proposition \ref{rank}, we give sufficient conditions for a matrix-product code over a commutative ring to be free, and we give its generating matrix in Corollary \ref{COR}. We prove in Theorem \ref{INEQ} that a well-know lower bound for the minimum Hamming distance of a matrix-product code over a finite field or a finite chain ring remains valid over a commutative ring and we, further, give a sufficient condition for such a lower bound to be sharp. When we impose finiteness on $R$, more results are proved. Over such a ring, we generalize in Proposition \ref{P above} a well-know fact that tells when the dual of a matrix-product code is also a matrix-product code. This is used in Corollary \ref{dual 3} to give a generating matrix of the dual for a matrix-product code, and it is also used in Corollary \ref{self dual} to give characterizations of self-dual, self-orthogonal, and LCD matrix-product codes. As an interesting application, we study in Section \ref{sigma,delta} matrix-product codes arising from $(\s,\d)$-codes over finite commutative rings. In this section, we bring together results from the authors' work \cite{BD} and results proved in this paper to construct matrix-product codes out of $(\s, \d)$-codes, give generating matrices for such codes and their dual codes (Propositions \ref{star 1} and \ref{star 3}), and give a criterion in Proposition \ref{star 2} which tests when such a code is self-dual. Appropriate highlighting examples are also give.

\section{{\bf Matrix-Product Codes over Commutative Rings}}
In this section, unless further assumptions are stated, $R$ stands for a commutative ring with identity.

\begin{dfn} Let $A \in M_{s\times l}(R)$ with $s\leq l$.
\begin{itemize}
\item[(i)] If the rows of $A$ are linearly independent over $R$, we say that $A$  has full rank over $R$.

\item[(ii)] If there is $B\in M_{l \times s}(R)$ such that $AB = I_s$ (the $s\times s$ identity matrix), we say that $A$ is right-invertible and $B$ is the right inverse of $A$. Left-invertibility is defined similarly. If $s=l$, we say that $A$ is invertible if it has both right and left inverses.

\item[(iii)] If $s = l$ and the determinant $\mbox{det}(A)$ is a unit of $R$, then we say that $A$ is non-singular.
\end{itemize}
\end{dfn}

\begin{prop} If $A\in M_{s\times s}(R)$, then the following statements are equivalent:

(i) A is invertible.

(ii) A is non-singular.

If, further, $R$ is finite, then the above two statements are equivalent to the following:

(iii) A has full rank.
\end{prop}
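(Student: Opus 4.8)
The plan is to prove the equivalences by a cycle of implications, using standard facts about determinants over commutative rings. The key algebraic tool is the adjugate (classical adjoint) formula: for any $A \in M_{s\times s}(R)$, one has $A\,\mathrm{adj}(A) = \mathrm{adj}(A)\,A = \det(A)\,I_s$, which holds over any commutative ring.

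First I would show (ii) $\Rightarrow$ (i). If $\det(A) \in U(R)$, then $B := \det(A)^{-1}\mathrm{adj}(A)$ satisfies $AB = BA = I_s$, so $A$ is invertible with two-sided inverse $B$. Next, for (i) $\Rightarrow$ (ii), suppose $AB = I_s$ for some $B$. Taking determinants and using multiplicativity, $\det(A)\det(B) = \det(I_s) = 1$, so $\det(A)$ is a unit; hence $A$ is non-singular. (Note that a one-sided inverse already forces invertibility once we know the determinant is a unit, so the distinction between right and left inverses dissolves here.) This closes the equivalence of (i) and (ii) with no finiteness hypothesis.

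Now assume $R$ is finite and add (iii). For (ii) $\Rightarrow$ (iii): if $\det(A)$ is a unit and $x A = 0$ for some row vector $x \in R^s$, then right-multiplying by $\mathrm{adj}(A)$ gives $\det(A)\, x = 0$, and since $\det(A)$ is a unit, $x = 0$; thus the rows of $A$ are linearly independent, i.e. $A$ has full rank. The remaining implication (iii) $\Rightarrow$ (ii) is where finiteness is essential, and I expect this to be the main obstacle. The idea is that full rank of $A$ means the $R$-linear map $R^s \to R^s$, $x \mapsto xA$, is injective; since $R^s$ is a finite set, an injective self-map is also surjective, so this map is bijective, hence $A$ has a left inverse (and by surjectivity, a right inverse as well), putting us back to case (i). Alternatively, one can argue that injectivity of $x \mapsto xA$ on $R^s$ forces $\det(A)$ to be a non-zerodivisor (if $\det(A)$ were a zerodivisor, one produces a nonzero $x$ with $xA=0$ via the adjugate, after passing to a suitable annihilator element), and in a finite commutative ring every non-zerodivisor is a unit; hence $\det(A) \in U(R)$, giving (ii).

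The subtle point to get right in the $(iii)\Rightarrow(ii)$ step is the claim that full rank forces $\det(A)$ to be a non-zerodivisor: one should argue that if $r\det(A) = 0$ with $r \neq 0$, then $r\,\mathrm{adj}(A) \cdot A = r\det(A) I_s = 0$; if $r\,\mathrm{adj}(A) \neq 0$ this directly contradicts full rank, and if $r\,\mathrm{adj}(A) = 0$ one iterates/descends on the structure of $A$ (or invokes that $rA$ having a relation built from a minor expansion still yields a nontrivial dependence among the rows). Packaging this cleanly — perhaps by just using the injective-implies-bijective argument on the finite module $R^s$, which avoids zerodivisor bookkeeping entirely — is the cleanest route, and that is the version I would write up.
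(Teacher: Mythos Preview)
Your argument is correct. The equivalence (i)$\Leftrightarrow$(ii) via the adjugate identity and multiplicativity of the determinant is exactly the ``standard argument'' the paper alludes to; for the finite case, your injective-implies-bijective argument on the finite set $R^s$ is clean and complete, and you are right to prefer it over the zerodivisor route (which, as you note, essentially amounts to reproving McCoy's theorem).

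The only difference from the paper is one of presentation rather than substance: the paper does not spell any of this out but simply cites \cite{Mc} for (i)$\Leftrightarrow$(ii) and \cite[Corollary 2.8]{FLL} for the equivalence with (iii) in the finite case. Your write-up is a self-contained unpacking of what those references contain, so the underlying mathematics coincides; you gain independence from external sources at the cost of a few extra lines.
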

\begin{proof}
The equivalence of the first two statements follows from the standard argument of computing the inverse of a square matrix (\cite{Mc}). For the last statement, see \cite[Corollary 2.8]{FLL}.
\end{proof}

\begin{prop} \label{rank} Let $A=(a_{ij})\in M_{s\times l}(R)$ be of full rank and $C_i$ be $[n, k_i]$-linear codes over $R$ for $i=1, \dots, s$. Then the matrix-product code $[C_1 \dots C_s]\,A$ is an $[nl, \sum_{i=1}^s k_i]$-linear code over $R$. If, further, $R$ is finite, then $\mid [C_1 \dots C_s]\,A\mid =\prod_{i=1}^{i=s }\mid C_i\mid $
\end{prop}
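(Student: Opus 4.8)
The plan is to build an explicit $R$-module isomorphism from $C_1 \times \cdots \times C_s$ onto $[C_1 \dots C_s]\,A$ and argue that this both pins down the rank and, when $R$ is finite, the cardinality. First I would consider the map $\Phi : C_1 \times \cdots \times C_s \to M_{n \times l}(R)$ sending $(c_1, \dots, c_s)$ to $(c_1 \dots c_s)\,A$, using the column-form description given in the introduction; by construction its image is exactly $[C_1 \dots C_s]\,A$, and linearity of each $C_i$ together with bilinearity of matrix multiplication makes $\Phi$ an $R$-module homomorphism. Since $C_1 \times \cdots \times C_s$ is free of rank $\sum_{i=1}^s k_i$ (each $C_i$ being $[n,k_i]$-linear), it then suffices to show $\Phi$ is injective: a surjective homomorphism from a free module of rank $r$ onto a module that is itself free would not a priori have the same rank unless injectivity is established, so injectivity is the crux.

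The heart of the argument is therefore: if $(c_1 \dots c_s)\,A = 0$, then every $c_i = 0$. Here I would use that $A$ has full rank in the sense of the Definition, i.e. its rows are $R$-linearly independent. Viewing the codeword in $(R^n)^l$ form, the equation $(c_1 \dots c_s)\,A = 0$ says $\sum_{i=1}^s a_{ij}\,c_i = 0$ in $R^n$ for each $j = 1, \dots, l$; reading this coordinate by coordinate in $R^n$, for each fixed coordinate position $t \in \{1,\dots,n\}$ the scalars $(x_{t1}, \dots, x_{ts}) \in R^s$ satisfy $\sum_{i=1}^s x_{ti}\,a_{ij} = 0$ for all $j$, i.e. the row vector $(x_{t1}, \dots, x_{ts})$ annihilates $A$ on the left. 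Full rank of $A$ forces $(x_{t1}, \dots, x_{ts}) = 0$, hence $x_{ti} = 0$ for all $t,i$, so each $c_i = 0$. This gives injectivity, and hence $\Phi$ is an $R$-module isomorphism onto $[C_1 \dots C_s]\,A$; therefore the matrix-product code is free of rank $\sum_{i=1}^s k_i$, and since its length is visibly $nl$, it is an $[nl, \sum_{i=1}^s k_i]$-linear code.

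Finally, for finite $R$, a bijection $C_1 \times \cdots \times C_s \xrightarrow{\sim} [C_1 \dots C_s]\,A$ immediately yields $|[C_1 \dots C_s]\,A| = \prod_{i=1}^s |C_i|$, completing the proof. I expect the only genuinely delicate point to be making precise the phrase "the rows of $A$ annihilate it on the left implies they are zero" — that is, confirming that $R$-linear independence of the rows of $A$ (the stated notion of full rank) is exactly the statement that left-multiplication of $A$ by a nonzero row vector is nonzero; this is a matter of unwinding the definition rather than a real obstacle. Everything else is routine bilinear-algebra bookkeeping, and no finiteness of $R$ is needed for the freeness and rank claims — only for the cardinality count.
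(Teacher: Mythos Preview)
Your proposal is correct and follows essentially the same approach as the paper: both construct the $R$-module homomorphism $(c_1,\dots,c_s)\mapsto (c_1\dots c_s)\,A$ from $\prod_{i=1}^s C_i$ onto the matrix-product code, verify injectivity coordinatewise by invoking the $R$-linear independence of the rows of $A$, and then read off the rank and (for finite $R$) the cardinality from the resulting bijection. Your unpacking of the injectivity step---that $(x_{t1},\dots,x_{ts})A=0$ is precisely a linear relation among the rows of $A$---is exactly what the paper uses, just stated a bit more explicitly.
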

\begin{proof} The map
\begin{align*}
\phi: \prod_{i=1}^s C_i  \subseteq (R^n)^s &\rightarrow  [C_1 \dots C_s]\,A \subseteq M_{n\times l}(R)\simeq R^{nl},\\
 (c_1, \dots, c_s) &\mapsto(c_1 \dots c_s)\,A
\end{align*}
is a homomorphism of $R$-modules. If $(c_1, \dots ,c_s)\in \prod_{i=1}^s C_i$ is such that $\phi (c_1, \dots ,c_s)=0$, then
for each $ 1\leq k\leq n$ and $ 1\leq t\leq l$ we have  $\sum_{j=1}^{s}x_{kj}a_{jt}=0$. As $A$ has full rank over $R$, it follows that for each $ 1\leq k\leq n$ and $ 1\leq t\leq l$ we have $x_{kj}=0$. Therefore $\phi$ is injective. It is clear, by construction, that $\phi$ is surjective and, therefore, the rank of $[C_1 \dots C_s]\,A$ is equal to the rank of $\prod_{i=1}^s C_i$, which is $k_1+\dots+k_s $. Finally, the last statement follows from the bijectivity of $\phi$.
\end{proof}

\begin{lem} \label{FINITE} Let $M$ be a free $R$-module of rank $k$, and $S$ a system of $k$ elements of $M$. If $S$ generates $M$ over $R$, then $S$ is a basis of $M$ over $R$.
\end{lem}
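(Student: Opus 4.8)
The plan is to exploit the standard trick for modules over commutative rings: reduce the statement about a surjective endomorphism to a statement about generators and then invoke a Nakayama-type or determinant-type argument. Concretely, let $S = \{m_1, \dots, m_k\}$ generate $M$, and fix an $R$-basis $\{e_1, \dots, e_k\}$ of $M$. Define the $R$-linear map $\varphi: M \to M$ by $\varphi(e_i) = m_i$; since $S$ generates $M$, $\varphi$ is surjective. It suffices to show $\varphi$ is injective, for then $\varphi$ carries the basis $\{e_i\}$ to the linearly independent set $S$, which is therefore a basis.

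The key step is the well-known fact that a surjective endomorphism of a finitely generated module over a commutative ring is automatically injective (this is a corollary of the Cayley--Hamilton / determinant trick, often attributed to Vasconcelos). I would include a short proof: regard $M$ as a module over the polynomial ring $R[X]$ by letting $X$ act as $\varphi$; since $\varphi$ is surjective, $XM = M$, so by the Cayley--Hamilton theorem (in the form that gives $\det(XI - \Phi)$ annihilating $M$, where $\Phi$ is a matrix of $\varphi$ with respect to the generators $m_1, \dots, m_k$) there is a polynomial $p(X) = 1 + Xq(X)$ with $p(\varphi) = 0$ on $M$. Hence for any $v \in M$ with $\varphi(v) = 0$ we get $v = -\varphi(q(\varphi)(v)) = -q(\varphi)(\varphi(v)) = 0$, using that $\varphi$ commutes with $q(\varphi)$. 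This yields injectivity of $\varphi$.

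Alternatively, and perhaps more in the spirit of the matrix-oriented development in this paper, I would phrase it via determinants: writing $\varphi$ in the basis $\{e_i\}$ gives a matrix $P \in M_k(R)$, and surjectivity of $\varphi$ forces $P$ to be surjective as a map $R^k \to R^k$. One then checks that a surjective square matrix over a commutative ring has unit determinant (lift the relation $PQ' = I$ obtained from surjectivity on each standard basis vector, so $P$ is right-invertible, hence $\det(P)$ is a unit, hence $P$ is invertible by the earlier Proposition). Invertibility of $P$ immediately gives injectivity of $\varphi$ and hence that $S$ is a basis.

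The main obstacle is purely a matter of bookkeeping: making sure the determinant trick is applied to the right matrix (the transition matrix between the two $k$-element systems) and not accidentally assuming what we want to prove. Once injectivity of the associated endomorphism is in hand, the conclusion that $S$ is a basis is immediate. I would present the Cayley--Hamilton version since it is self-contained and does not even require $k$ to be handled separately from edge cases like $k = 0$.
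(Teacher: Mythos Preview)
Your argument is correct and is the standard way to establish this lemma. The paper's own proof goes in the opposite direction: it fixes a basis $(e_1,\dots,e_k)$ and defines a map $f:M\to M$ by the rule $a_1s_1+\cdots+a_ks_k\mapsto a_1e_1+\cdots+a_ke_k$, then argues that $\sum a_is_i=0$ forces $\sum a_ie_i=f(0)=0$ and hence $a_i=0$. The trouble is that well-definedness of such an $f$ is \emph{equivalent} to the linear independence of the $s_i$, so the step ``$f(\sum a_is_i)=\sum a_ie_i$'' already presupposes the conclusion. Your map runs the other way (basis $\mapsto$ generating set), which is automatically well-defined, and you then supply the genuinely nontrivial ingredient---the Cayley--Hamilton/Nakayama fact that a surjective endomorphism of a finitely generated module over a commutative ring is injective---to finish. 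The determinant variant you sketch (surjectivity of $P\in M_k(R)$ yields a right inverse, hence $\det(P)\in U(R)$, hence $P$ is invertible) works equally well and dovetails with the Proposition on invertibility of square matrices already stated in the paper. Either form of your argument is more rigorous than the paper's and is what one would expect to see here.
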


\begin{proof} Assume that $S=(s_1,\dots ,s_k)$, and let $B=(e_1,\dots ,e_k)$ be an $R$-basis of $M$. Consider the $R$-module homomorphism
\begin{align*}
f: M=\sum_{i=1}^k Rs_i & \rightarrow  M=\bigoplus_{i=1}^k Re_i\\
 a_1s_1+\dots +a_k s_k &\mapsto a_1e_1+\dots +a_k e_k.
\end{align*}
If $\displaystyle\sum_{i=1}^k a_i s_i=0$ with $a_i\in R$, then $f(\sum_{i=1}^k a_i s_i)=\sum_{i=1}^k a_i e_i=0$. Since $e_1, \dots, e_s$ are linearly independent over $R$, $a_1=\dots =a_s=0$. Thus, $S$ is linearly independent and, hence, $S$ is an $R$-basis of $M$.
\end{proof}

\begin{remark}
In contrast with vector spaces over fields, one should be warned that with $M$ as in Lemma \ref{FINITE}, a linearly independent system whose cardinality is $k$ is not necessary an $R$-basis of $M$. For instance, looking at $\Z$ as a free $\Z$-module of rank 1, we notice that 2 is linearly independent over $\Z$ but does not generate $\Z$.
\end{remark}

\begin{cor}\label{COR}
Let $A=(a_{ij})\in M_{s\times l}(R)$ be of full rank, and $C_{1}, \cdots, C_{s}$ be $[n, k_i]$-linear codes over $R$, respectively. If $G_i\in M_{k_i \times n}(R)$ is a generating matrix of $C_i$ for $i=1, \dots, s$, respectively, then $[C_1 \dots C_s]\,A$ is free with the following generating matrix $(k=\sum_{i=1}^s k_i)$:
\[G=\left(%
\begin{array}{cccc}
  a_{11} G_1& a_{12}G_1 & \cdots & a_{1l}G_1 \\
  a_{21}G_2 & a_{22}G_2 & \cdots &a_{2l}G_2 \\
  \vdots & \vdots & \cdots & \vdots \\
  a_{s1} G_s&a_{s2}G_s & \cdots &a_{sl}G_s \\
\end{array}%
\right)\in M_{k, ln}(R).\]
\end{cor}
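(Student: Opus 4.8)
The plan is to combine Proposition~\ref{rank} (which already tells us that $[C_1 \dots C_s]\,A$ is a free $R$-module of rank $k=\sum_{i=1}^s k_i$) with Lemma~\ref{FINITE}. Concretely, I would first exhibit $G$ as a matrix whose row space over $R$ equals $[C_1 \dots C_s]\,A$, and then observe that $G$ has exactly $k$ rows; since the code is free of rank $k$ and its $k$ rows generate it, Lemma~\ref{FINITE} forces those rows to be an $R$-basis, so $G$ is a generating matrix in the sense defined in the Introduction.

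The first step is the bookkeeping that $G$ generates the code. Here I would use the description of a codeword of $[C_1 \dots C_s]\,A$ as the $l$-tuple $(\sum_{i=1}^s a_{i1}c_i, \dots, \sum_{i=1}^s a_{il}c_i) \in (R^n)^l$ given in the Introduction, which under the identification $(R^n)^l \simeq R^{nl}$ becomes the concatenation $\big(\sum_i a_{i1}c_i \mid \dots \mid \sum_i a_{il}c_i\big)$. Writing $c_i = x_i G_i$ for a (unique) $x_i \in R^{k_i}$, the $t$-th block is $\sum_{i=1}^s a_{it}\,x_i G_i = (x_1 \dots x_s)\,(a_{1t}G_1 \mid \dots \mid a_{st}G_s)^{\mathsf t}$ in the obvious sense; stacking the $l$ blocks shows that this codeword equals $x\,G$ where $x=(x_1 \dots x_s)\in R^{k}$ runs over all of $R^k$ as the $c_i$ run over $C_i$. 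Hence the row space of $G$ over $R$ is exactly $[C_1 \dots C_s]\,A$, i.e. $G$ generates the code.

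The second step is to invoke freeness. By Proposition~\ref{rank}, $[C_1 \dots C_s]\,A$ is a free $R$-module of rank $k$, and $G$ has $k=\sum_{i=1}^s k_i$ rows (the $i$-th block $(a_{i1}G_i \mid \dots \mid a_{il}G_i)$ contributes $k_i$ rows). A system of $k$ elements that generates a free module of rank $k$ is, by Lemma~\ref{FINITE}, automatically an $R$-basis. Therefore the rows of $G$ form an $R$-basis of $[C_1 \dots C_s]\,A$, which is precisely the statement that $G \in M_{k,\,ln}(R)$ is a generating matrix of $[C_1 \dots C_s]\,A$.

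I expect no serious obstacle: the only mild subtlety is the indexing/blocking, namely checking that the column-wise description of a matrix-product codeword translates correctly into the block-row structure of $G$ (in particular that the $t$-th horizontal block of $G$ is $(a_{1t}G_1 \mid a_{2t}G_2 \mid \dots \mid a_{st}G_s)$ matched against the $t$-th block $R^n$-coordinate of the codeword), and being careful that we get all of $R^k$ as the coefficient vector, which uses that each $G_i$ is a \emph{generating} matrix of $C_i$ so that $c_i \mapsto x_i$ is well defined. Without Lemma~\ref{FINITE} one could not conclude $R$-linear independence of the rows of $G$ directly, so that lemma is the essential input that replaces the field-theoretic argument.
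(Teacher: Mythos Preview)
Your proposal is correct and follows essentially the same route as the paper: invoke Proposition~\ref{rank} for freeness of rank $k$, observe that the $k$ rows of $G$ generate the code, and then apply Lemma~\ref{FINITE} to upgrade the generating system to a basis. The only difference is that you spell out the block computation showing the rows of $G$ span $[C_1\dots C_s]\,A$, whereas the paper simply asserts this is ``clearly'' the case.
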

\begin{proof} By Proposition \ref{rank}, $[C_1 \dots C_s]\,A$ is free of rank $k=\sum_{i=1}^s k_i$ over $R$. The set $S$ consisting of the rows of the matrix $G$ is clearly a generating system of the code $[C_1 \dots C_s]\,A$ over $R$. Since the cardinality of $S$ is equal to the rank of $[C_1 \dots C_s]\,A$, it follows from Lemma \ref{FINITE} that $S$ is a basis of $[C_1 \dots C_s]\,A$ over $R$. Hence, $G$ is a generating matrix of $[C_1 \dots C_s]\,A$.
\end{proof}

Let $A=(a_{ij})\in M_{s\times l}(R)$. For $i = 1,\dots , s$, let $L_i = (a_{i1},\dots ,a_{il})\in R^l$ be the $i$th row of $A$ and $C_{L_i}$ the left $R$-submodule of $R^l$ generated by $L_1,\dots ,L_i$ (so, $C_{L_i}=\sum_{j=1}^{i} R\,L_j$). Let $d_i$ be the minimum Hamming distance of $C_i$ and $D_i$ the minimum Hamming distance of $C_{L_i}$. Generalizing its counterparts over a finite field (\cite{BN}) and a finite chain ring (\cite{As}), the theorem below gives a lower bound for the minimum Hamming distance of a matrix-product code $[C_1 \dots C_s]\,A$ over a commutative ring when $A$ has full rank. It, further, gives a sufficient condition under which the bound is sharp, generalizing \cite[Theorem 1]{HLR}. Note that, in the following theorem, we use the multiplication map $R^l \times R^n \to R^{ln}$ defined by:
\[UV=(u_1v_1,\dots,u_1v_n,u_2v_1,\dots,u_2v_n,\dots,u_lv_1,\dots, u_lv_n),\]
for $U=(u_1,\dots ,u_l)\in R^l$ and $V=(v_1,\dots, v_n)\in R^n$.

\begin{thm} \label{INEQ} Keep the above notation. If $A\in M_{s\times l}(R)$ is of full rank, then the minimum distance of the matrix-product code $[C_1 \dots C_s]\,A$ satisfies the following inequality:
$$d([C_1 \dots C_s]\,A) \geq \mbox{min}\{d_1D_1, d_2D_2,..., d_sD_s\}.$$
If, furthermore, $C_s \subseteq  C_{s-1}\subseteq \dots \subseteq C_1$ and, for every $i=1, \dots, s$, there exist $x_i\in C_i$ and $X_i\in C_{L_i}$ such that $\mbox{wt}(x_i)=d_i$, $\mbox{wt}(X_i)=D_i$, and $X_ix_i\neq 0$, then
   $$d([C_1 \dots C_s]\,A) = min\{d_1D_1,d_2D_2,\dots,d_sD_s\}.$$
\end{thm}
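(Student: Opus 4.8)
The plan is to prove the two assertions separately, starting with the lower bound. Take a nonzero codeword $(c_1 \dots c_s)\,A$ of $[C_1 \dots C_s]\,A$ and let $r$ be the largest index with $c_r \neq 0$ (so $c_{r+1} = \dots = c_s = 0$). Using the identification of the codeword with the $l$-tuple $(\sum_{i=1}^s a_{i1}c_i, \dots, \sum_{i=1}^s a_{il}c_i) \in (R^n)^l$, I would write the $t$-th block as $\sum_{i=1}^r a_{it}c_i$. First I would argue that the weight of the whole codeword is bounded below by $D_r \cdot d_r$: there are at least $D_r$ values of $t$ for which the block $\sum_{i=1}^r a_{it}c_i$ is nonzero, and each nonzero block has Hamming weight at least $d_r$. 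The first claim is the delicate one; I discuss it in the next paragraph. The second claim is immediate once one knows each nonzero block lies in a code with minimum distance $\geq d_r$ — here one uses that $\sum_{i=1}^r a_{it}c_i \in C_1 + \dots + C_r$; to conclude this sum-code has minimum distance at least $d_r$ one does \emph{not} in general have containment, so the lower-bound half must instead be argued more carefully, following the classical Blackmore--Norton argument: reduce to the "triangular" case by noting that full rank of $A$ lets us assume (after a suitable right multiplication, or by an inductive peeling argument) that the relevant structure is governed by the nested modules $C_{L_i}$.

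More concretely, for the lower bound I would induct on $s$, or equivalently run the standard argument: let $r$ be maximal with $c_r \neq 0$. Since $A$ has full rank, the rows $L_1, \dots, L_s$ are $R$-linearly independent, so $\sum_{i=1}^r a_{it}c_i$, viewed across all $t$, is not identically zero; more precisely the vector $(\sum_i a_{i1}c_i,\dots,\sum_i a_{il}c_i)$ has its support (as a subset of $\{1,\dots,l\}$ of "nonzero blocks") containing the support of some nonzero element of $C_{L_r}$ — this is where linear independence of $L_1,\dots,L_r$ and $c_r\neq 0$ combine. Hence at least $D_r$ blocks are nonzero. Each such nonzero block equals $\sum_{i=1}^{r} a_{it} c_i$; writing $c_r$ out, the block is a codeword-combination, and the cleanest route is to observe that in the triangular reduction the $r$-th "pivot" forces the block to be a nonzero element of $C_r$ (or a translate that still has weight $\ge d_r$). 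Thus the codeword's weight is $\ge D_r d_r \ge \min_i D_i d_i$. The main obstacle here is making the "support contains the support of a nonzero element of $C_{L_r}$" step fully rigorous over a general commutative ring, where one cannot divide and cannot invoke dimension counts; I expect to lean on full rank together with Lemma~\ref{FINITE}-type reasoning and a careful choice of $r$.

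For the sharpness half, assume $C_s \subseteq \dots \subseteq C_1$ and pick, for each $i$, elements $x_i \in C_i$ with $\mathrm{wt}(x_i) = d_i$, $X_i \in C_{L_i}$ with $\mathrm{wt}(X_i) = D_i$, and $X_i x_i \neq 0$. Fix an index $j$ realizing the minimum, so $d_j D_j = \min_i d_i D_i$. I would construct an explicit codeword of weight exactly $d_j D_j$. Write $X_j = \sum_{i=1}^{j} \lambda_i L_i$ for suitable $\lambda_i \in R$ (possible since $X_j \in C_{L_j} = \sum_{i=1}^j R L_i$). Then set $c_i := \lambda_i x_j$ for $i = 1,\dots,j$ (this lies in $C_i$ because $x_j \in C_j \subseteq C_i$ by the nesting) and $c_i := 0$ for $i > j$. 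The resulting codeword $(c_1 \dots c_s)\,A$ has $t$-th block $\sum_{i=1}^j \lambda_i a_{it} x_j = \big(\sum_{i=1}^j \lambda_i a_{it}\big) x_j$, which is the $t$-th coordinate of $X_j$ times $x_j$; so the codeword is exactly $X_j x_j$ in the multiplication notation introduced before the theorem. Its weight is $\mathrm{wt}(X_j)\,\mathrm{wt}(x_j) = D_j d_j$ provided $X_j x_j \neq 0$ — which is exactly the hypothesis — because a nonzero scalar product pattern $u_a v_b$ is nonzero precisely on the product of the supports only when there is no extra cancellation; here the hypothesis $X_j x_j \ne 0$ plus the fact that the nonzero entries of $X_j$ and $x_j$ multiply coordinatewise with no interaction across blocks forces exactly $D_j d_j$ nonzero entries. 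Hence $d([C_1\dots C_s]\,A) \le D_j d_j = \min_i d_i D_i$, and combined with the lower bound we get equality. The one subtlety to double-check is that $\mathrm{wt}(X_j x_j) = \mathrm{wt}(X_j)\mathrm{wt}(x_j)$ really does follow from $X_j x_j \ne 0$ alone over a ring with zero divisors; I believe it does, since each block is a single scalar multiple of the fixed vector $x_j$, so a block vanishes iff that scalar annihilates $x_j$, and if even one block survives then by the structure every block with nonzero scalar survives on exactly the support of $x_j$ — but this coordinate-by-coordinate bookkeeping is the place to be careful.
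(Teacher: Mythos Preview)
Your lower-bound argument has a genuine gap, and the paper avoids it by decomposing the codeword the other way. You view $(c_1\dots c_s)A$ as an $l$-tuple of column-blocks $\sum_{i=1}^r a_{it}c_i\in R^n$ and want (i) at least $D_r$ blocks nonzero and (ii) each nonzero block of weight $\ge d_r$. Claim (ii) is the problem: the block $\sum_{i=1}^r a_{it}c_i$ lies only in $C_1+\dots+C_r$, and without the nesting hypothesis there is no reason this sum has minimum distance $\ge d_r$. Your suggested ``triangular reduction'' does not repair this over a general commutative ring; right-multiplying $A$ by an invertible matrix alters the codes $C_{L_i}$ and hence the numbers $D_i$ themselves. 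The paper instead decomposes the $n\times l$ codeword by \emph{rows}: the $k$-th row is $(x_{k,1},\dots,x_{k,s})A=\sum_{i=1}^s x_{k,i}L_i$, which lies in $C_{L_r}$ since $x_{k,i}=0$ for $i>r$. Because $\mathrm{wt}(c_r)\ge d_r$, there are at least $d_r$ indices $k$ with $x_{k,r}\ne 0$, and for each such $k$ the row is a nonzero element of $C_{L_r}$ (full rank of $A$), hence has weight $\ge D_r$. Summing over those rows gives $\mathrm{wt}(c)\ge d_rD_r$ directly, with no nesting and no reduction. The point is that rows are \emph{scalar} combinations of the $L_i$ and therefore land in $C_{L_r}$, whereas your column-blocks are vector combinations of the $c_i$ and land only in $\sum_i C_i$.

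Your sharpness construction matches the paper's, but the justification that $\mathrm{wt}(X_jx_j)=D_jd_j$ is wrong over rings with zero divisors: with $R=\Z_6$, $X_j=(2,3)$, $x_j=(2,3)$ one gets $X_jx_j=(4,0,0,3)$ of weight $2$, not $4$, even though $X_jx_j\ne 0$. Your heuristic (``if one block survives then every block with nonzero scalar survives on exactly the support of $x_j$'') fails because different scalars can annihilate different coordinates of $x_j$. The fix is that you only need $\mathrm{wt}(X_jx_j)\le D_jd_j$, which is immediate from the multiplication formula, together with $X_jx_j\ne 0$; then the already-proved lower bound forces $\mathrm{wt}(X_jx_j)\ge \min_i d_iD_i=d_jD_j$, and the sandwich gives equality. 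This is exactly how the paper closes the argument.
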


\begin{proof}
Let $0\neq c=(x_1 \dots x_s)\,A \in [C_1 \dots C_s]\,A$. There exists some $r\in \{1,\dots ,s-1\}$, $x_r \neq 0$ and $x_i=0$ for $r<i\leq s$ (So, $c=(x_1 \dots x_r 0 \dots 0)\,A$); otherwise set $r=s$. Since $0\neq x_r\in C_r$, $wt(x_r)\geq d_r$ and, thus, $x_r$ has at least $d_r$ nonzero components, $x_{i_1,r},...x_{i_{d_r},r}$ say. Now, for each $t =1,...,d_r$, we have $y_t=(x_{i_t, 1} x_{i_t, 2} \dots, x_{i_t, s})\,A\in C_{L_r}$ because $x_{j,k}=0$ for each $j=1,...,l$ and $r<k\leq s$. Since $x_{i_t,r}\neq 0$ and $A$ has a full rank over $R$, we deduce that $y_t\neq 0$. So, $wt(y_t)\geq D_r$. Hence,
\begin{align*}
wt(c) =wt((x_1x_2 \dots x_s)A) &=wt((x_{11}x_{12} \dots x_{1s})A,(x_{21}x_{22} \dots x_{2s})A,\dots ,(x_{n1}x_{n2} \dots x_{ns})A)\\
&= \sum_{k=1}^n \mbox{wt}((x_{k1}x_{k2} \dots x_{ks})A)\\
& \geq \sum_{t=1}^{ d_r}\mbox{wt}((x_{i_t,1} x_{i_t,2}\dots x_{i_t,s})A)\\
&= \sum_{t=1}^{d_r} \mbox{wt}(y_t)\\
& \geq d_rD_r\\
& \geq min\{d_1D_1,d_2D_2,\dots,d_sD_s\}.
\end{align*}

Now assume, further, that $C_s \subseteq  C_{s-1}\subseteq \dots \subseteq C_1$ and, for every $i=1, \dots, s$, there exist $x_i\in C_i$ and $X_i\in C_{L_i}$ such that $\mbox{wt}(x_i)=d_i$, $\mbox{wt}(X_i)=D_i$, and $X_ix_i\neq 0$. By the first part of this proof, we have $ d(C) \geq min\{d_1D_1,d_2D_2,\dots,d_sD_s\}$. Let $r\in \{1, \dots, s\}$ be such that $d_rD_r=min\{d_1D_1,d_2D_2,\dots,d_sD_s\}$. Take $x_r\in C_r$ and $X_r\in C_{L_r}$ so that $\mbox{wt}(x_r)=d_r$, $\mbox{wt}(X_r)=D_r$, and $x_r X_r \neq 0$. We show that $X_r x_r \in [C_1 \dots C_s]\,A$ and $\mbox{wt}(X_r x_r)=d_r D_r$, which settles the proof. Write $x_r=(x_{1,r}, \dots, x_{n,r})$ and $X_r=\sum_{i=1}^r \la_i L_i \in C_{L_r}$ for $\la_i\in R$. Set $y_i = \la_i x_r$ for $i=1, \dots, s$ with $\la_i=0$ for $i=r+1, \dots, s$. As $y_i\in C_r$, $y_i \in C_i$ for all $i=1, \dots, r$. Also, $y_i=0\in C_i$ for all $i=r+1, \dots, s$. Thus, $[y_1\dots y_s]\,A\in [C_1 \dots C_s]\,A$ and
\begin{align*}
[y_1\dots y_s]A&=(\sum_{i=1}^s y_i a_{i,1},\sum_{i=1}^s y_i a_{i,2},\dots,\sum_{i=1}^s y_i a_{i,l})\\&=(\sum_{i=1}^r \lambda_ia_{i,1},\sum_{i=1}^r\lambda_ia_{i,2},\dots,\sum_{i=1}^r \lambda_ia_{i,l})x_r\\ &=X_r x_r.
\end{align*}
As $0\neq X_r x_r\in [C_1 \dots C_s]\,A$, $\mbox{wt}(X_r x_r)\geq min\{d_1D_1,d_2D_2,\dots,d_sD_s\}= d_r D_r$. 
On the other hand, as precisely $d_r$ components of $x_r$ are nonzero and precisely $D_r$ components of $X_r$ are nonzero, it follows from the definition of the multiplication $X_r x_r$ that $\mbox{wt}(X_r x_r)\leq d_r D_r$. Hence, $\mbox{wt}(X_r x_r)=d_r D_r$ as claimed.
\end{proof}




\begin{remark} Note that if $R$ is a field (or even an integral domain), then the requirement on $x_i$ and $X_i$ in Theorem \ref{INEQ} holds automatically. On the other hand, we present here an example which shows that such a requirement is sufficient but not necessary. Let $R=\Z_4$ and consider the matrix $A=\left(\begin{array}{ccc} 1&2&0\\0&2&1 \end{array}\right)\in M_{2\times 3}(\Z_4)$. It is clear that $A$ is of full rank. Consider $C_{L_1}=\Z_4 (1,2,0)$ and $C_{L_2}=\Z_4 (1,2,0)+\Z_4(0,2,1)$. It can be checked that $D_1=D_2=1$, the only codewords in $C_{L_1}$ of weight 1 is $(2,0,0)$, and the only codewords in $C_{L_2}$ of weight 1 are $(2,0,0)$ and $(0,0,2)$. Set $C_1=C_{L_2}$ and $C_2=C_{L_1}$, so $C_2\subseteq C_1$ and $d^*=1$. It is easily seen that $x_1 X_1= x_2 X_1 = (0,0,0,0,0,0)$. Nonetheless, $d([C_1 C_2]\,A)=1=d^*$ since the codeword $$c=\left(\begin{array}{cc} 0& 2\\0&0\\0&0\\\end{array}\right)A=\left(\begin{array}{ccc} 0&0&2\\0&0&0\\0&0&0 \end{array}\right)\in [C_1 C_2]\,A$$
has weight 1.
\end{remark}

\section{{\bf On the dual of a matrix-product code over a finite commutative ring}}
Throughout this section, $R$ denotes a finite commutative ring with identity. A non-empty subset of the free $R$-module $M_{n\times m}(R)\cong R^{nm}$ can be looked at as a code over $R$ of length $nm$, where a codeword (which is a matrix $A \in M_{n \times m}(R)$) is thought of as a word over $R$ of length $nm$ in the obvious way. We consider the following bilinear form on $M_{n\times m}(R)$: $$<A,B> = \mbox{tr}(AB^T )=\sum_{i=1}^{n}\sum_{j=1}^{m} a_{ij}b_{ij},$$ for $A=(a_{ij})$ and $B=(b_{ij})$, where $B^T$ is the transpose of $B$ and $\mbox{tr}(AB^T)$ is the trace of the $n\times n$ matrix $AB^T$.

Our next goal is to give sufficient conditions for the dual of a matrix-product code to also be a matrix-product code, generalizing similar results over finite fields and finite chain rings (see for instance \cite{As} and \cite{BN}).

\begin{lem}\label{free dual}$($\cite[Proposition 2.9]{FLL}$)$
If $C$ is an $[n,k]$-linear code over $R$, then $C^\perp$ is an $[n, n-k]$-linear code over $R$.
\end{lem}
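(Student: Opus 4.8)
The plan is to reduce to the case where $R$ is a finite \emph{local} ring, and then to produce an explicit $R$-basis of $C^{\perp}$ by completing a generating matrix of $C$ to an invertible square matrix. For the reduction: being finite, $R$ decomposes as a product $R_1\times\cdots\times R_m$ of finite local rings; under the induced decomposition $R^n\cong\prod_t R_t^{\,n}$ every submodule, and in particular $C$ and $C^{\perp}$, decomposes as a product $\prod_t$ of submodules of $R_t^{\,n}$, the Euclidean form splits componentwise, and a submodule is free of rank $r$ precisely when each of its components is free of rank $r$ over the corresponding $R_t$. Hence it is enough to prove the statement when $R$ is finite local.

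So let $R$ be finite local with maximal ideal $\m$ and residue field $\F_q=R/\m$, and let $g_1,\dots,g_k$ be an $R$-basis of $C$, i.e.\ the rows of a generating matrix $G\in M_{k,n}(R)$. The crucial step is to check that the residues $\bar g_1,\dots,\bar g_k\in\F_q^{\,n}$ remain $\F_q$-linearly independent. Here finiteness is used essentially: since $R$ is Artinian, $\m$ is nilpotent, say $\m^{N}=0\neq\m^{N-1}$ (if $N=1$ then $R$ is a field and there is nothing to prove). If the $\bar g_i$ were dependent there would be $\lambda_1,\dots,\lambda_k\in R$, not all in $\m$ — say $\lambda_1\in U(R)$ — with $\sum_i\lambda_i g_i\in\m R^n$; multiplying by any $t\in\m^{N-1}$ gives $\sum_i(t\lambda_i)g_i=t\big(\sum_i\lambda_i g_i\big)\in\m^{N-1}\m R^n=\m^{N}R^n=0$, so the $R$-linear independence of the $g_i$ forces $t\lambda_i=0$ for all $i$, and taking $i=1$ with $\lambda_1$ a unit yields $t=0$, contradicting $\m^{N-1}\neq 0$.

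Once $\bar G$ has rank $k$ over $\F_q$, I would choose $g_{k+1},\dots,g_n\in R^n$ whose residues complete $\bar g_1,\dots,\bar g_k$ to an $\F_q$-basis of $\F_q^{\,n}$; the $n\times n$ matrix $\widetilde G$ with rows $g_1,\dots,g_n$ is then invertible modulo $\m$, hence $\det(\widetilde G)\in U(R)$ and $\widetilde G$ is invertible over $R$. Since the first $k$ rows of $\widetilde G$ are the $g_i$, a word $x$ lies in $C^{\perp}$ iff $\langle x,g_i\rangle=0$ for $i\le k$, i.e.\ iff the first $k$ entries of the row vector $x\widetilde G^{\,T}$ vanish; equivalently $x=(0,\dots,0,w_1,\dots,w_{n-k})(\widetilde G^{-1})^{T}$ for a unique $w\in R^{n-k}$. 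Thus $w\mapsto(0\mid w)(\widetilde G^{-1})^{T}$ is an $R$-module isomorphism $R^{n-k}\xrightarrow{\ \sim\ }C^{\perp}$, so $C^{\perp}$ is free of rank $n-k$ (a generating matrix being the last $n-k$ rows of $(\widetilde G^{-1})^{T}$), and pulling back through the product decomposition finishes the general case.

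The main obstacle is the middle paragraph: over a commutative ring with zero divisors, linear independence of a family need not survive reduction modulo a maximal ideal, and it is precisely the nilpotency of $\m$ — which is where $R$ finite is used — that rescues it here. Everything after that is the standard ``complete to an invertible matrix and solve the resulting linear system'' argument, valid verbatim over any commutative local ring, and the first paragraph's passage to the local case is routine bookkeeping with the idempotent decomposition of $R$.
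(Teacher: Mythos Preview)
The paper does not give its own proof of this lemma; it simply records the statement and cites \cite[Proposition 2.9]{FLL}. So there is no argument in the paper to compare yours against.

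Your proof is correct and self-contained. The reduction to the local case via the idempotent decomposition of a finite commutative ring is standard and handled correctly (submodules, the Euclidean form, and freeness of a fixed rank all split componentwise). The nilpotency argument in the middle paragraph --- showing that an $R$-basis of $C$ remains linearly independent modulo the maximal ideal $\m$ --- is the genuine crux, and your use of a nonzero $t\in\m^{N-1}$ to annihilate the putative relation while preserving the unit coefficient is sound. Finally, completing $\bar G$ to an invertible $n\times n$ matrix and reading off $C^{\perp}$ as the image of $R^{n-k}$ under $w\mapsto(0\mid w)(\widetilde G^{-1})^{T}$ is a clean explicit construction of a free basis. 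In short, you have supplied a valid proof where the paper supplies only a reference.
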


\begin{prop}\label{P above}
If $A\in M_{s\times s}(R)$ is non-singular,
and $C_1,\dots ,C_s$ are free linear codes over $R$ of length $n$, then $([C_1\dots C_s]\,A)^{\perp}=[C_1^{\perp}\dots C_s^{\perp}]\,(A^{-1})^{T}$.
 \end{prop}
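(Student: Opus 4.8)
The plan is to show the two containments $([C_1\dots C_s]\,A)^{\perp} \supseteq [C_1^{\perp}\dots C_s^{\perp}]\,(A^{-1})^{T}$ and, separately, that both sides are free codes of the same rank; then equality follows. First I would set up the bilinear form computation. Take a generic codeword $c = (c_1\dots c_s)\,A \in [C_1\dots C_s]\,A$ with $c_i \in C_i$, and a generic element $d = (d_1\dots d_s)\,(A^{-1})^{T} \in [C_1^{\perp}\dots C_s^{\perp}]\,(A^{-1})^{T}$ with $d_i \in C_i^{\perp}$. Viewing both as matrices in $M_{n\times s}(R)$ (noting $l=s$ here), I would compute $\langle c, d\rangle = \mbox{tr}\big( (c_1\dots c_s)\,A\,\big((d_1\dots d_s)(A^{-1})^{T}\big)^{T}\big) = \mbox{tr}\big( (c_1\dots c_s)\,A\,A^{-1}\,(d_1\dots d_s)^{T}\big) = \mbox{tr}\big( (c_1\dots c_s)(d_1\dots d_s)^{T}\big)$. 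The $(j,j)$ diagonal entry of $(c_1\dots c_s)(d_1\dots d_s)^{T}$ is $\sum_{i=1}^{s} (c_i)_j (d_i)_j$, so the trace is $\sum_{j=1}^{n}\sum_{i=1}^{s} (c_i)_j (d_i)_j = \sum_{i=1}^{s} \langle c_i, d_i\rangle = 0$, since each $d_i \in C_i^{\perp}$. This gives the containment $\supseteq$.

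Next I would handle ranks. Since each $C_i$ is free of some rank $k_i$, Lemma \ref{free dual} gives that $C_i^{\perp}$ is free of rank $n - k_i$. Because $A$ is non-singular, so is $(A^{-1})^{T}$, hence in particular it has full rank; by Proposition \ref{rank} the code $[C_1^{\perp}\dots C_s^{\perp}]\,(A^{-1})^{T}$ is free of rank $\sum_{i=1}^{s}(n-k_i) = ns - \sum_{i=1}^{s}k_i$. On the other side, Proposition \ref{rank} gives that $[C_1\dots C_s]\,A$ is free of rank $\sum_{i=1}^{s}k_i$ and has length $ns$, so by Lemma \ref{free dual} its dual $([C_1\dots C_s]\,A)^{\perp}$ is free of rank $ns - \sum_{i=1}^{s}k_i$. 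Thus the two sides of the claimed identity are free submodules of $R^{ns}$ of the same rank, one contained in the other.

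Finally I would upgrade the containment-plus-equal-rank to genuine equality. This is the step I expect to be the main obstacle, since over a general commutative ring a free submodule $N \subseteq M$ of rank equal to $\mbox{rank}(M)$ need not equal $M$ (the $2\Z \subseteq \Z$ phenomenon already noted in the Remark after Lemma \ref{FINITE}). Here, however, we are inside the finite-ring setting. The cleanest route is via cardinality: since $R$ is finite, Proposition \ref{rank} also gives $\big|[C_1^{\perp}\dots C_s^{\perp}]\,(A^{-1})^{T}\big| = \prod_{i=1}^{s}|C_i^{\perp}|$ and $\big|[C_1\dots C_s]\,A\big| = \prod_{i=1}^{s}|C_i|$. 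For a finite commutative ring, $|C_i|\cdot|C_i^{\perp}| = |R|^{n}$ (this follows from the fact that $C \mapsto C^\perp$ is an inclusion-reversing bijection on the finite lattice of linear codes with $|C||C^\perp| = |R|^n$, a standard consequence of finiteness — alternatively one can invoke that $C_i, C_i^\perp$ are free of complementary ranks over the finite ring $R$). Hence $\big|[C_1^{\perp}\dots C_s^{\perp}]\,(A^{-1})^{T}\big| = |R|^{ns}/\prod_{i=1}^{s}|C_i| = |R|^{ns}/\big|[C_1\dots C_s]\,A\big| = \big|([C_1\dots C_s]\,A)^{\perp}\big|$, the last equality again by the complementary-cardinality fact applied to the length-$ns$ code $[C_1\dots C_s]\,A$. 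A containment of finite sets of equal cardinality is an equality, which finishes the proof. If one prefers to avoid the cardinality fact $|C||C^\perp|=|R|^n$, an alternative is to observe that both sides, being free of the same rank with one inside the other, forces equality once we also know the ambient quotient is torsion-free in the appropriate sense — but the counterexample in the Remark shows this needs the finiteness hypothesis, so routing through $|C||C^\perp| = |R|^n$ is the safe choice.
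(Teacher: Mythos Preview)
Your argument is correct and follows essentially the same route as the paper: the paper establishes the rank equality $\mbox{rank}([C_1^{\perp}\dots C_s^{\perp}]\,(A^{-1})^{T})=ns-\mbox{rank}([C_1\dots C_s]\,A)$ via Lemma~\ref{free dual} and Proposition~\ref{rank}, then defers to \cite[Lemma~6.1]{BN} for the containment-by-trace computation and the conclusion, all of which you spell out explicitly. One small caveat: the assertion that $|C|\cdot|C^{\perp}|=|R|^{n}$ for every linear code over every finite commutative ring is a Frobenius-ring fact and not valid in the stated generality, but your parenthetical alternative (both sides are free of the same rank over a finite $R$, hence have cardinality $|R|^{ns-\sum k_i}$) is exactly what is needed and suffices.
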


\begin{proof}
Let $\mbox{rank}(C_j)=k_j$ for $j=1, \dots, s$. Since $R$ is finite, it follows from Lemma \ref{free dual} that $C_j^\perp$ are free over $R$ of rank $n-k_j$ for $j=1, \dots, s$. Thus, by Proposition \ref{rank}, both $[C_1 \dots C_s]\,A$ and $[C_1^\perp \dots C_s^\perp]\,(A^{-1})^T$ are free with
$$\mbox{rank}([C_1^\perp \dots C_s^\perp]\,(A^{-1})^T)=\sum_{j=1}^s (n-k_j)=ns-\sum_{j=1}^s k_j=ns-\mbox{rank}([C_1 \dots C_s]\,A).$$
From now on, just follow the proof of \cite[Lemma 6.1]{BN} with the obvious notational adjustments.
\end{proof}

\begin{remark}\label{Z_20}
Freeness of the input codes is necessary for the conclusion of Proposition \ref{P above} to hold, as the following example shows: Let $R=\Z_{20}$, $C_1=10\Z_{20}$, $C_2=2\Z_{20}$, and $A=\left(\begin{array}{cc} 3&0\\0&7\end{array}\right)$. It can be easily seen that $C_1^\perp=C_2$, $C_2^\perp=C_1$, and $A$ is non-singular with $(A^{-1})^T=\left(\begin{array}{cc} 7&0\\0&3\end{array}\right)$. Now, for $(8,2)\in [C_1 C_2]\,A$ and $(2,0)\in [C_1^\perp C_2^\perp]\,(A^{-1})^T$, it is clear that $<(8,2),(2,0)>=16 \neq 0$ in $\Z_{20}$. So, $([C_1 C_2]\,A)^\perp \neq [C_1^\perp C_2^\perp]\,(A^{-1})^T$. Notice that all assumptions of Proposition \ref{P above} are satisfied here except that $C_1$ and $C_2$ are not free over $\Z_{20}$.
\end{remark}

\begin{cor}\label{dual 3}
Keep the assumptions of Proposition \ref{P above}. If $G_1', \dots, G_s'$ are generating matrices of $C_1^\perp, \dots, C_s^\perp$, respectively, and $(A^{-1})^T=(b_{ij})$, then a generating matrix of $([C_1\dots C_s]\,A)^\perp$ is:
$$G'=\left(%
\begin{array}{cccc}
  b_{11} G_1'& b_{12}G_1' & \cdots & b_{1l}G_1' \\
  b_{21} G_2' & b_{22}G_2' & \cdots & b_{2l}G_2' \\
  \vdots & \vdots & \cdots & \vdots \\
  b_{s1} G_s'& b_{s2}G_s' & \cdots & b_{sl}G_s' \\
\end{array}%
\right).$$
\end{cor}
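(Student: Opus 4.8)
The plan is to deduce the statement immediately by combining Proposition \ref{P above} with Corollary \ref{COR}. Proposition \ref{P above} already identifies $([C_1\dots C_s]\,A)^{\perp}$ with the matrix-product code $[C_1^{\perp}\dots C_s^{\perp}]\,(A^{-1})^{T}$, so it suffices to exhibit a generating matrix of the latter code. To invoke Corollary \ref{COR} on it, I need two ingredients: that the defining matrix $(A^{-1})^{T}$ has full rank, and that each $C_i^{\perp}$ is a free code with a known generating matrix.

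For the first ingredient, since $A$ is non-singular we have $\mbox{det}(A)\in U(R)$, hence $\mbox{det}\big((A^{-1})^{T}\big)=\mbox{det}(A^{-1})=\mbox{det}(A)^{-1}\in U(R)$; thus $(A^{-1})^{T}$ is itself non-singular and, being square, its rows are $R$-linearly independent, i.e. it has full rank. For the second ingredient, $R$ is finite in this section, so Lemma \ref{free dual} guarantees that each $C_i^{\perp}$ is an $[n,n-k_i]$-linear code over $R$, where $k_i=\mbox{rank}(C_i)$; in particular each $C_i^{\perp}$ is free, and the hypothesis of the corollary provides a generating matrix $G_i'$ of it.

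With these two points established, Corollary \ref{COR} applied to the full-rank matrix $(A^{-1})^{T}=(b_{ij})$ and to the free codes $C_1^{\perp},\dots,C_s^{\perp}$ with generating matrices $G_1',\dots,G_s'$ shows that $[C_1^{\perp}\dots C_s^{\perp}]\,(A^{-1})^{T}$ is free with generating matrix exactly the displayed block matrix $G'$; combining this with the identification from Proposition \ref{P above} completes the proof. I do not anticipate a real obstacle here, as the result is essentially a formal consequence of the two cited statements; the only point meriting a word of care is the verification that $(A^{-1})^{T}$ has full rank, which as noted reduces to multiplicativity of the determinant together with $\mbox{det}(M)=\mbox{det}(M^{T})$.
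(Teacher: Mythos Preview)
Your proof is correct and follows essentially the same approach as the paper: invoke Lemma \ref{free dual} to get that each $C_i^\perp$ is free, then combine Proposition \ref{P above} with Corollary \ref{COR}. The only difference is that you explicitly verify the full-rank hypothesis on $(A^{-1})^{T}$ needed for Corollary \ref{COR}, which the paper leaves implicit.
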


\begin{proof}
If $C_i$ is of rank $k_i$ over $R$ for $i=1, \dots, s$, it follows from Lemma \ref{free dual} that $C_i^\perp$ is free of rank $n-k_i$ for $i=1, \dots , s$. Letting $G_i'\in M_{(n-k_i),n}(R)$ be the respective generating matrices of $C_i^\perp$, the result now follows from Corollary \ref{COR} and Proposition \ref{P above}.
\end{proof}

For $r_1,\dots ,r_s\in R$, let $\mbox{Diag}(r_1,\dots ,r_s)$ denote the diagonal matrix of size $s\times s$ whose principal-diagonal entry in position $i,i$ is $r_i$ for $i=1, \dots, s$. 

 \begin{lem}\label{LDiag}  If $C_{1}, \dots, C_{s}$  are linear codes over $R$ of the same length, $Y\in M_{s\times l}(R)$, and $X=\mbox{Diag}(r_1,\dots,r_s)Y$ for some $r_1,\dots ,r_s\in U(R)$. Then, $[C_1 \dots C_s]\,X = [C_1 \dots C_s]\,Y$.
 \end{lem}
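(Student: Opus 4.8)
The plan is to show the two matrix-product codes coincide by showing that each is contained in the other, exploiting that multiplication by a unit is an invertible operation on each input code. Write $D = \mathrm{Diag}(r_1, \dots, r_s)$ with $r_i \in U(R)$, so $X = DY$.

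First I would take an arbitrary codeword of $[C_1 \dots C_s]\,X$. By definition it has the form $(c_1 \dots c_s)\,X = (c_1 \dots c_s)\,D\,Y$ with $c_i \in C_i$. Now observe that $(c_1 \dots c_s)\,D$ is precisely the $n \times s$ matrix whose $i$th column is $r_i c_i$. Since $C_i$ is a linear code and $r_i \in R$, we have $r_i c_i \in C_i$; setting $c_i' = r_i c_i \in C_i$ we get $(c_1 \dots c_s)\,D = (c_1' \dots c_s')$, and hence $(c_1 \dots c_s)\,X = (c_1' \dots c_s')\,Y \in [C_1 \dots C_s]\,Y$. This gives the inclusion $[C_1 \dots C_s]\,X \subseteq [C_1 \dots C_s]\,Y$.

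For the reverse inclusion I would use that $D$ is invertible: since each $r_i$ is a unit, $D^{-1} = \mathrm{Diag}(r_1^{-1}, \dots, r_s^{-1})$, and $Y = D^{-1} X$ with $D^{-1}$ again a diagonal matrix with unit entries. So the same argument, applied with $Y = D^{-1}X$ in place of $X = DY$ (that is, with $X$ and $Y$ swapped and the $r_i$ replaced by $r_i^{-1}$), yields $[C_1 \dots C_s]\,Y \subseteq [C_1 \dots C_s]\,X$. Combining the two inclusions gives the claimed equality.

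There is really no serious obstacle here: the only thing to be careful about is the bookkeeping identifying $(c_1 \dots c_s)\,D$ column-by-column, and the observation that the linearity of the $C_i$ (not freeness) is all that is needed, so that multiplication by a unit $r_i$ maps $C_i$ bijectively onto itself. The lemma will then be used later to normalize the matrix $A$ when constructing matrix-product codes from $(\sigma,\delta)$-codes.
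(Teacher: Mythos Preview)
Your proof is correct and follows essentially the same approach as the paper: the key observation in both is that $(c_1 \dots c_s)D = (r_1c_1 \dots r_sc_s)$ and that $r_iC_i = C_i$ because $r_i$ is a unit. The paper merely phrases this more tersely at the level of sets, writing $[C_1 \dots C_s]\,X = [(r_1C_1)\dots(r_sC_s)]\,Y = [C_1 \dots C_s]\,Y$ in one line rather than unpacking the double inclusion elementwise.
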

\begin{proof} As $X=\mbox{Diag}(r_1,\dots,r_s)Y$, $[C_1 \dots C_s]\,X=[(r_1C_1)\dots(r_sC_s)]\,Y$. Since $r_i$ is a unit, $r_iC_i=C_i$ for $i=1, \dots, s$, and thus the claimed conclusion follows immediately.
\end{proof}

The following result gives characterizations of self-dual, self-orthogonal, and LCD matrix-product codes over finite commutative rings.

\begin{cor}\label{self dual}
Let $A\in M_{s\times s}(R)$ be such that $AA^T=\mbox{Diag}(r_1, \dots, r_s)$ with $r_i\in U(R)$ for $i=1, \dots, s$, and let $C_1,\dots,C_s$ be linear codes over $R$ of the same length. Then

$1$. $[C_1 \dots C_s]\,A$ is self-dual if and only if
$C_i$ is self-dual for every $i=1,\dots ,s$.

$2$. $[C_1 \dots C_s]\,A$ is self-orthogonal if and only if
$C_i$ is self-orthogonal for every $i=1,\dots ,s$.

$3$. $[C_1 \dots C_s]\,A$ is LCD if and only if $C_i$ is LCD for every $i=1,\dots ,s$.
\end{cor}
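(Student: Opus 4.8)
The key observation is that when $AA^T = \mathrm{Diag}(r_1,\dots,r_s)$ with all $r_i \in U(R)$, then $A$ is non-singular (its determinant satisfies $\det(A)^2 = \prod r_i \in U(R)$, so $\det(A) \in U(R)$), and moreover $A^{-1} = A^T \mathrm{Diag}(r_1^{-1},\dots,r_s^{-1})$, hence $(A^{-1})^T = \mathrm{Diag}(r_1^{-1},\dots,r_s^{-1})\,A$. The first step, therefore, is to reduce to the free case and compute the dual cleanly. Since self-duality, self-orthogonality, and being LCD each force the ambient structure in a way that involves duals of the $C_i$, and Proposition \ref{P above} requires the $C_i$ to be free, I would first argue that under each of the three hypotheses the relevant codes are automatically free: for a self-dual or self-orthogonal $[C_1\dots C_s]\,A$ one notes that $[C_1\dots C_s]\,A \subseteq ([C_1\dots C_s]\,A)^\perp$ combined with Lemma \ref{free dual}-type rank counting pins down ranks, but the cleanest route is simply to observe that in the ``if'' directions the $C_i$ are assumed self-dual / self-orthogonal / LCD and — the subtle point — these properties over a finite commutative ring do not by themselves guarantee freeness. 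So I would instead prove the corollary in a way that sidesteps Proposition \ref{P above} for the non-free cases, or restrict attention to where it applies; the honest approach is: first handle the statement assuming the $C_i$ are free (so Proposition \ref{P above} applies), and then remark that self-dual codes over finite commutative rings need not be free in general, flagging that the intended reading is with free input codes — but given the corollary is stated without a freeness hypothesis, I suspect the authors intend to use that $C_i$ self-orthogonal with $|C_i||C_i^\perp| = |R|^n$ (Lemma \ref{free dual} over finite $R$) forces $|C_i| \le |R|^{n/2}$, etc. Let me take the direct computational route instead.

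By Lemma \ref{LDiag}, since $(A^{-1})^T = \mathrm{Diag}(r_1^{-1},\dots,r_s^{-1})\,A$ and the $r_i^{-1}$ are units, we have $[C_1^\perp \dots C_s^\perp]\,(A^{-1})^T = [C_1^\perp \dots C_s^\perp]\,A$ whenever the $C_i^\perp$ are defined as codes (which they always are). Combining this with Proposition \ref{P above} (valid once the $C_i$ are free, which I will address), we get the clean identity
$$([C_1 \dots C_s]\,A)^\perp = [C_1^\perp \dots C_s^\perp]\,A.$$
Now part 1 follows: $[C_1\dots C_s]\,A$ is self-dual iff it equals $[C_1^\perp\dots C_s^\perp]\,A$; since $A$ is non-singular, the map $(d_1,\dots,d_s)\mapsto (d_1\dots d_s)\,A$ is injective (Proposition \ref{rank}) and its image determines the tuple of input codes uniquely — so $[C_1\dots C_s]\,A = [C_1^\perp\dots C_s^\perp]\,A$ iff $C_i = C_i^\perp$ for all $i$. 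For part 2, self-orthogonality $[C_1\dots C_s]\,A \subseteq ([C_1\dots C_s]\,A)^\perp = [C_1^\perp\dots C_s^\perp]\,A$ translates, again by non-singularity of $A$, into $C_i \subseteq C_i^\perp$ for every $i$; here I need the fact that $[C_1\dots C_s]\,A \subseteq [C_1'\dots C_s']\,A$ forces $C_i \subseteq C_i'$, which follows by multiplying on the right by $A^{-1}$. Part 3: $[C_1\dots C_s]\,A$ is LCD iff $[C_1\dots C_s]\,A \cap [C_1^\perp\dots C_s^\perp]\,A = \{0\}$; right-multiplication by $A^{-1}$ is a bijection carrying this intersection to $\prod_i (C_i \cap C_i^\perp)$ sitting inside $(R^n)^s$, which is zero iff each $C_i \cap C_i^\perp = \{0\}$.

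**Main obstacle.** The genuine difficulty is the freeness issue: Proposition \ref{P above} (hence the displayed dual identity) was proved only for free input codes, yet Corollary \ref{self dual} is stated with no freeness hypothesis. I expect the resolution to be that in each of the three cases one can prove freeness is forced — e.g. Remark \ref{Z_20}'s warning notwithstanding, when $AA^T$ is a unit-diagonal matrix the extra rigidity may suffice — or else the corollary should be read as tacitly assuming free $C_i$; I would check carefully whether $C_i$ self-orthogonal over finite commutative $R$ with the stated constraint on $A$ forces $C_i$ free (I doubt it does in general, so I anticipate needing the freeness assumption, inherited from Proposition \ref{P above}, to be in force). The remaining steps — the translations via right-multiplication by $A^{-1}$ and the injectivity of $\phi$ from Proposition \ref{rank} — are routine. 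A secondary small point is verifying $\det(A)\in U(R)$ and the explicit form of $(A^{-1})^T$ from $AA^T = \mathrm{Diag}(r_1,\dots,r_s)$, but that is a one-line computation.
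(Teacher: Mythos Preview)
Your approach is essentially the paper's: deduce that $A$ is invertible with $(A^{-1})^T=\mathrm{Diag}(r_1^{-1},\dots,r_s^{-1})A$, invoke Lemma~\ref{LDiag} and Proposition~\ref{P above}, and then translate equality, containment, and zero-intersection of the matrix-product codes into the corresponding statements for the $C_i$ and $C_i^\perp$ via invertibility of $A$. The only cosmetic difference is where Lemma~\ref{LDiag} is applied: the paper uses it on $[C_1\dots C_s]$ to get $[C_1\dots C_s]\,A=[C_1\dots C_s]\,(A^{-1})^T$ and then compares both sides of each condition at the matrix $(A^{-1})^T$, whereas you use it on $[C_1^\perp\dots C_s^\perp]$ to get $([C_1\dots C_s]\,A)^\perp=[C_1^\perp\dots C_s^\perp]\,A$ and compare at $A$.

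Your concern about freeness is legitimate and is \emph{not} addressed in the paper's proof either: the paper simply invokes Proposition~\ref{P above} without checking that the $C_i$ are free, so the gap you flagged is present in the original as well. As you suspected, self-duality, self-orthogonality, or the LCD property over a general finite commutative ring do not force freeness, so the corollary as stated should be read with the tacit freeness hypothesis inherited from Proposition~\ref{P above}.
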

\begin{proof} To begin with, as $\mbox{Diag}(r_1, \dots, r_s)$ is invertible and $A$ is a square matrix over a commutative ring (\cite{Mc}), $A$ and $A^T$ are invertible too, with $$(A^T)^{-1}=(A^{-1})^T=\mbox{Diag}(r_1^{-1}, \dots, r_s^{-1})A.$$ By Lemma \ref{LDiag},
\begin{align}
[C_1 \dots C_s]\, A=[C_1 \dots C_s]\,(A^{-1})^T.
\end{align}
Also, by Proposition \ref{P above},
\begin{align}
([C_1 \dots C_s]\,A)^\perp = [C_1^\perp \dots C_s^\perp]\,(A^{-1})^T.
\end{align}

1. By (1) and (2),
\begin{align*}
\mbox{$[C_1 \dots C_s]\,A$ is self-dual } & \mbox{iff }  [C_1 \dots C_s]\,A =([C_1 \dots C_s]\,A)^\perp\\
                                         & \mbox{iff }  [C_1 \dots C_s]\,(A^{-1})^T = [C_1^\perp \dots C_s^\perp]\,(A^{-1})^T\\
                                         & \mbox{iff }  \mbox{$C_i=C_i^\perp$ for every $1\leq i\leq s$ (as $A$ is invertible)}\\
                                         & \mbox{iff }  \mbox{$C_i$ is self-dual for every $1\leq i\leq s$}.
\end{align*}

2. Proving the self-orthogonality statement is similar.

3. We have
\begin{align*}
 \mbox{$[C_1\dots C_s]\,A$ is LCD } & \mbox{iff }  [C_1 \dots C_s]\,A \cap ([C_1 \dots C_s]\,A)^\perp=\{0_{R^{sn}}\}\\
                                     &\mbox{iff } [C_1\dots C_s]\,(A^{-1})^T \cap [C_1^\perp \dots C_s^\perp](A^{-1})^T=\{0_{R^{sn}}\}\\
                                        &\mbox{iff } \mbox{$[C_1 \dots C_s]\cap[C_1^\perp \dots C_s^\perp]=\{0_{R^{sn}}\}$ (as $A$ is invertible)}\\
                                        &\mbox{iff } \mbox{$C_i\cap C_i^\perp=\{0_{R^{n}}\}$ for every $i=1,\dots,s$}\\
                                        &\mbox{iff } \mbox{$C_i$ LCD for every $i=1,\dots,s$}.
\end{align*}
\end{proof}

\section{{\bf Matrix-product codes arising from $(\s,\d)$-codes over finite commutative rings}}\label{sigma,delta}
Throughout this section, $R$ denotes a finite commutative ring with identity. We use here some results from the authors' paper \cite{BD} combined with results from the previous sections to construct matrix-product codes based on $(\s, \d)$-codes over $R$ and, further, give a criterion for self-duality of such codes. We start off by recalling some terminologies and results from \cite{BD} and \cite{BL}.

\subsection{$(\s,\d)$-codes}\label{s,d codes}\hfill

For a ring endomorphism $\s$ of $R$ that maps the identity to itself and a $\s$-derivation $\d$ of $R$, let $R_{\s, \d}$ denote the (non-commutative) ring of skew-polynomials $\sum_{i=0}^m a_i X^i$ over $R$ with the usual addition of polynomials and multiplication based on the rule $Xa=\s(a)X+\d(a)$ for $a\in R$. Let $g(X)=\sum_{i=0}^{n-k}g_i X^i\in R_{\s, \d}$ be a monic skew-polynomial, for some $1\leq k \leq n-1$. Fix any monic $f(X)=\sum_{i=0}^n a_i X^i\in R_{\s, \d}$ of degree $n$ of which $g(X)$ is a right divisor in $R_{\s, \d}$, and let $(f)_l$ be the left principal ideal of $R_{\s, \d}$ generated by $f$. Then, $R_{\s, \d}/(f)_l$ is both a left $R_{\s, \d}$-module as well as a free left $R$-module with an $R$-basis $\mathcal{B}=\{1+(f)_l, X+(f)_l, \dots, X^{n-1}+(f)_l\}$. On the other hand, letting $$C_f =  \left( \begin{array}{ccccccc} 0 & 1 & 0 & 0 & \dots & 0 & 0\\ 0 & 0 & 1 & 0 & \dots & 0 & 0\\ \vdots & \vdots & \vdots & \vdots & \vdots & \vdots & \vdots\\ 0 & 0 & 0 & 0 & \dots & 0 & 1\\ -a_0 & -a_1 &  & \dots&  & -a_{n-2} & -a_{n-1} \end{array} \right)$$ be the companion matrix of $f$, define the group endomorphism $T_f:R^n\to R^n$ by $$(t_0, \dots, t_{n-1})\mapsto (\s(t_0), \dots, \s(t_{n-1}))C_f +(\d(t_0), \dots, \d(t_{n-1})).$$ Then the map $R_{\s, \d}\times R^n\to R^n$ given by $(P(X), (t_0, \dots, t_{n-1}))\mapsto P(T_f)(t_0, \dots, t_{n-1})$ defines a left action of $R_{\s, \d}$ on $R^n$ which makes $R^n$ a left $R_{\s, \d}$-module in an obvious way. Now, the map $\phi_f:R^n \to R_{\s, \d}/(f)_l$ given by $(t_0, \dots, t_{n-1})\mapsto \sum_{i=0}^{n-1} t_iX^i+(f)_l$ is a left $R_{\s, \d}$-module isomorphism. For every $p(X)\in R_{\s, \d}$, there is a unique $q(X)=\sum_{i=0}^{n-1} t_i X^i\in R_{\s, \d}$ (of degree at most $n-1$) such that $p(X)+ (f)_l=q(X)+(f)_l$. We can see that $(t_0, \dots, t_{n-1})=\phi_f^{-1}(p(X)+(f)_l)$, and we call $(t_0, \dots, t_{n-1})$ the coordinates of $p(X)+(f)_l$ with respect to the basis $\mathcal{B}$.

If $\mathcal{M}$ is a left $R_{\s, \d}$-submodule of $R_{\s, \d}/(f)_l$, the left $R_{\s, \d}$-submodule $C=\phi_f^{-1}(\mathcal{M})$ of $R^n$ is called an $(f, \s, \d)$-code (or just a $(\s, \d)$-code) of length $n$ over $R$. Note that $C$ consists of the coordinates of all the elements of $\mathcal{M}$. As $R$ is a subring of $R_{\s, \d}$, $\mathcal{M}$ and $C$ are also left $R$-modules. A linear code $C\subseteq R^n$ is called a principal $(f, \s, \d)$-code (or just a principal $(\s, \d)$-code) generated by $g$ if there exist monic skew-polynomials $f,g\in R_{\s, \d}$ of degrees $n$ and $n-k$, respectively, such that $g$ is a right divisor of $f$ in $R_{\s,\d}$ and $C=\phi_f^{-1}((g)_l/(f)_l)$. Such a code is free over $R$ of rank $k$ (see \cite[Theorem 1]{BL}). A $(\s, \d)$-code is called a principal $(\s, \d)$-constacyclic code if it is generated by some monic right divisor of $X^n-a$ for some $a\in U(R)$.

Starting with a set of monic skew-polynomials $g_1, \dots, g_s$ over $R$, we give here a construction of a free matrix-product code $C$ over $R$ whose input codes are principal $(\s,\d)$-codes generated by the $g_i$'s and, further, give its generating matrix in terms of the matrix of the code and the coefficients of the $g_j$'s. We also give a construction of the dual $C^\perp$ of $C$ under certain extra assumptions and give its generating matrix (a parity-check matrix of $C$).

For every $j=1, \dots, s$, let $\s_j$ be a ring endomorphism of $R$ that maps the identity to itself, $\d_j$ a $\s_j$-derivation of $R$, $g_j(X)=\sum_{i=0}^{n-k_j}g_{i,j}X^i\in R_{\s_j, \d_j}$ monic, and $C_j$ the principal $(\s_j, \d_j)$-code over $R$ generated by $g_j(X)$ (so, there exists a monic $f_j(X)\in R_{\s_j, \d_j}$ of degree $n$ of which $g_j(X)$ is a right divisor in $R_{\s_j, \d_j}$). By \cite[Theorem 2.7]{BD}, a generating matrix $G_j\in M_{k_j\times n}(R)$ of $C_j$ is given by
$$G_j=\left( \begin{array}{ccccccc} g_{0,j} & \dots  & g_{n-k_j,j} & 0& 0 &  \dots & 0\\ g_{0,j}^{(1)} & \dots & g_{n-k_j, j}^{(1)} & \s_j(g_{n-k_j, j}) & 0 &\dots & 0\\ g_{0,j}^{(2)} & \dots & g_{n-k_j,j}^{(2)} & g_{n-k_j+1,j}^{(2)} & \s_j^2(g_{n-k_j,j}) &\dots & 0\\  \vdots & \vdots & \vdots & \vdots  &\vdots& \vdots & \vdots \\ g_{0,j}^{(k_j-1)} & \dots & g_{n-k_j,j}^{(k_j-1)} & g_{n-k_j+1,j}^{(k_j-1)} & g_{n-k_j+2,j}^{(k_j-1)} &\dots & \s_j^{k_j-1}(g_{n-k_j,j}) \end{array} \right) ,$$
where

(1) $g_{i,j}=0$ for $n-k_j+1 \leq i \leq n-1$,

(2) $g_{0,j}^{(i)} =\d_j(g_{0,j}^{(i-1)})$ for $1\leq i \leq k_j-1$, and

(3) $g_{t,j}^{(i)}=\d_j(g_{t,j}^{(i-1)})+\s_j( g_{t-1,j}^{(i-1)})$ for $1\leq i \leq k_j-1$ and $1\leq  t\leq n-2$.\\

The matrices $G_j$ take more elegant shapes if $\d_j=0$ where, by \cite[Corollary 2.8]{BD}, we would have
$$G_j=\left( \begin{array}{cccccccc} g_{0,j}  & \cdots  & g_{n-k_j,j} & 0& 0 &  \cdots & 0\\ 0 & \s_j(g_{0,j})  & \cdots & \s_j(g_{n-k_j,j}) & 0 &\dots & 0\\ \vdots & \ddots  & &   &\ddots&  & \vdots \\ 0 & 0 & \cdots &0 &   \s_j^{k_j-1}(g_{0,j}) & \cdots & \s_j^{k_j-1}(g_{n-k_j,j}) \end{array} \right).$$

On the other hand, if further $\s_j$ are ring automorphisms of $R$ and $g_j(X)$ are also left divisors in $R_{\s_j, \d_j}$ of $f_j(X)=\sum_{i=0}^n a_{i,j} X^i$ for all $j=1, \dots, s$ with $f_j(X)=g_j(X)h_j(X)$ for $h_j(X)=\sum_{i=0}^{k_j} h_{i,j} X^i\in R_{\s_j, \d_j}$, then (by \cite[Theorem 5.1]{BD}) a generating matrix $H_j\in M_{n-k_j, n}(R)$ of $C_j^\perp$, for $j=1, \dots, s$, is
$$H_j=\left( \begin{array}{cccccccc} h_{k_j,j} & h_{k_j,j}^{(1)} & h_{k_j,j}^{(2)} & \dots & h_{k_j,j}^{(n-k_j-1)} & h_{k_j,j}^{(n-k_j)} & \dots & h_{k_j,j}^{(n-1)}\\ 0  & \s_j(h_{k_j,j}) & h_{k_j+1,j}^{(2)} & \dots & h_{k_j+1,j}^{(n-k_j-1)} & h_{k_j+1,j}^{(n-k_j)} & \dots & h_{k_j+1,j}^{(n-1)}\\ 0  & 0 & \s_j^2(h_{k_j,j}) & \dots & h_{k_j+2,j}^{(n-k_j-1)} & h_{k_j+2,j}^{(n-k_j)} & \dots & h_{k_j+2,j}^{(n-1)} \\ \vdots & \vdots & \vdots & \vdots & \vdots & \vdots & \vdots & \vdots \\ 0  & 0 & 0 & \dots & \s_j^{n-k_j-1}(h_{k_j,j}) & h_{n-1,j}^{(n-k_j)} & \dots & h_{n-1,j}^{(n-1)} \end{array} \right),$$
where

1) $h_{i,j}=0$ for $k_j+1 \leq i \leq n-1$,

2) for $1\leq i \leq n-k_j-1$ and $1 \leq t\leq n-2$,\\
\hspace*{1cm}(i) $h_{0,j}^{(i)} =\d_j(h_{0,j}^{(i-1)})$,\\
\hspace*{1cm}(ii) $h_{t,j}^{(i)}=\d_j(h_{t,j}^{(i-1)})+\s_j( h_{t-1,j}^{(i-1)})$,

3) for $n-k_j \leq i \leq n-1$ and $1 \leq t\leq n-1$\\
\hspace*{1cm}(i) $h_{0,j}^{(i)} =\d_j(h_{0,j}^{(i-1)})-a_{0,j}\s_j(h_{n-1,j}^{(i-1)})$, and\\
\hspace*{1cm}(ii) $h_{t,j}^{(i)}=\d_j(h_{t,j}^{(i-1)})+\s_j( h_{t-1,j}^{(i-1)})-a_{t,j}\s_j(h_{n-1,j}^{(i-1)})$.



\subsection{Matrix-product codes arising from $(\s,\d)$-codes}\label{s,d MPC}\hfill

Keep the notations and assumptions of Subsection \ref{s,d codes}. For $A\in M_{s\times l}(R)$, we denote the matrix-product code $[C_1 \dots C_s]\,A$ by $[g_1 \dots g_s]\,A$ in order to emphasise a way of constructing a free matrix-product over $R$ out of a well-chosen set of skew-polynomials over $R$, as the following results indicate.

\begin{prop}\label{star 1} Keep the notations and assumptions of Subsection \ref{s,d codes}. Let $A=(a_{ij})\in M_{s\times l}(R)$ be of full rank. Then, the matrix $G\in M_{r\times ln}(R)$ $(r=\sum_{i=1}^s k_i)$ given by
$$G= \left( \begin{array}{cccc}a_{11}G_1 &a_{12}G_1 & \dots &a_{1l}G_1  \\
a_{21}G_2 &a_{22}G_2 & \dots &a_{2l}G_2  \\
\vdots &\vdots & \cdots &\vdots  \\
a_{s1} G_s &a_{s2}G_s & \dots &a_{sl}G_s  \\
 \end{array} \right) $$
is a generating matrix for the matrix-product code $[g_1 \dots g_s]\,A$.
\end{prop}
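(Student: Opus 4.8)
The plan is to obtain this as an immediate application of Corollary \ref{COR}, once the input codes and their generating matrices have been correctly identified. By construction, $[g_1 \dots g_s]\,A$ is nothing but the matrix-product code $[C_1 \dots C_s]\,A$, where $C_j$ is the principal $(\s_j,\d_j)$-code over $R$ generated by the monic skew-polynomial $g_j(X)$. As recalled in Subsection \ref{s,d codes}, such a code is free over $R$ of rank $k_j$ (by \cite[Theorem 1]{BL}), hence is an $[n,k_j]$-linear code over $R$; and by \cite[Theorem 2.7]{BD}, the matrix $G_j\in M_{k_j\times n}(R)$ written out in Subsection \ref{s,d codes} is a generating matrix of $C_j$.

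With these facts recorded, the hypotheses of Corollary \ref{COR} are met verbatim: $A=(a_{ij})\in M_{s\times l}(R)$ is of full rank (with $s\leq l$), and for each $j$ the code $C_j$ is an $[n,k_j]$-linear code over $R$ admitting the generating matrix $G_j$. Applying Corollary \ref{COR} with $k=r=\sum_{i=1}^s k_i$ then gives at once that $[C_1 \dots C_s]\,A=[g_1 \dots g_s]\,A$ is free over $R$ of rank $r$ and that the block matrix $G\in M_{r\times ln}(R)$ whose $(i,j)$-block is $a_{ij}G_i$ — that is, precisely the matrix displayed in the statement — is a generating matrix for it. This is the desired conclusion.

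There is no real obstacle here: the proof is a straight combination of the structural description of principal $(\s,\d)$-codes from \cite{BD, BL} with Corollary \ref{COR}, and the only points requiring care are purely bookkeeping, namely checking that the quoted $G_j$ are indeed generating matrices of the $C_j$ and that the row count $r=\sum_{i=1}^s k_i$ matches the rank given by Proposition \ref{rank}. One may also note in passing that when all $\d_j=0$ the same argument applies, using instead the simpler form of $G_j$ provided by \cite[Corollary 2.8]{BD}.
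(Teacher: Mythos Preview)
Your proof is correct and follows essentially the same approach as the paper: the paper's own proof simply notes that each $C_i$ is free of rank $k_i$ by \cite[Theorem 1]{BL} (see also \cite{BD}) and then invokes Corollary \ref{COR}. Your write-up is slightly more detailed (explicitly recalling that $G_j$ is a generating matrix via \cite[Theorem 2.7]{BD}), but the logical structure is identical.
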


\begin{proof} By \cite[Theorem 1]{BL} (see also \cite{BD}), $C_i$ is free of rank $k_i$ for every $i=1, \dots, s$. Now, applying Corollary \ref{COR} yields the claimed conclusion.
\end{proof}

\begin{prop}\label{star 3}
Besides the assumptions of Proposition \ref{star 1}, assume further that $\s_i$ is a ring automorphism of $R$, $g_i(X)$ is also a left divisor of $f_i(X)$ for $i=1, \dots, s$, and $A\in M_{s\times s}(R)$ is non-singular with $(A^{-1})^T=(b_{ij})$. Then the matrix $H\in M_{r' \times ln}(R)$ $(r'=\sum_{i=1}^s (n-k_i))$ given by
$$H= \left( \begin{array}{cccc}b_{11}H_1 &b_{12}H_1 & \dots &b_{1l}H_1  \\
b_{21}H_2 &b_{22}H_2 & \dots &b_{2l}H_2  \\
\vdots &\vdots & \cdots &\vdots  \\
b_{s1} H_s &b_{s2}H_s & \dots &b_{sl}H_s  \\
 \end{array} \right) $$
is a generating matrix for the dual matrix-product code $([g_1 \dots g_s]\,A)^\perp$.
\end{prop}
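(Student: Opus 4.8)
The plan is to reduce Proposition \ref{star 3} directly to results already established, so the proof is essentially a matter of verifying that the hypotheses line up correctly. First I would observe that, under the standing assumptions of Subsection \ref{s,d codes} together with the extra hypothesis that $\s_i$ is a ring automorphism of $R$ and $g_i(X)$ is also a left divisor of $f_i(X)$, the cited result \cite[Theorem 5.1]{BD} applies and tells us that $H_i\in M_{n-k_i,\,n}(R)$ as displayed in Subsection \ref{s,d codes} is a generating matrix of $C_i^\perp$ for each $i=1,\dots,s$. In particular each $C_i^\perp$ is free of rank $n-k_i$, which is also guaranteed abstractly by Lemma \ref{free dual} since $R$ is finite and each $C_i$ is free of rank $k_i$ by \cite[Theorem 1]{BL}.

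Next I would invoke Proposition \ref{P above}: since $A\in M_{s\times s}(R)$ is non-singular and $C_1,\dots,C_s$ are free linear codes over $R$ of length $n$, we have the identification
$$([g_1 \dots g_s]\,A)^\perp = ([C_1 \dots C_s]\,A)^\perp = [C_1^\perp \dots C_s^\perp]\,(A^{-1})^T.$$
Now apply Corollary \ref{dual 3} (or equivalently Corollary \ref{COR} applied to the free codes $C_i^\perp$ with generating matrices $H_i$ and the full-rank — indeed non-singular — matrix $(A^{-1})^T=(b_{ij})$): a generating matrix of $[C_1^\perp \dots C_s^\perp]\,(A^{-1})^T$ is precisely the block matrix $H$ displayed in the statement, of size $r'\times ln$ with $r'=\sum_{i=1}^s(n-k_i)$. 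Combining the two displays gives that $H$ is a generating matrix of $([g_1 \dots g_s]\,A)^\perp$, which is what is claimed. (One should also note that here $l=s$, so $H$ has $ln=sn$ columns, consistent with Lemma \ref{free dual} and the length of the ambient module $R^{sn}$.)

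The only genuinely non-routine point — and hence the step I would present most carefully — is checking that the hypotheses needed to quote \cite[Theorem 5.1]{BD} for each input code are exactly the ones imposed in the statement: monic $g_i$ of degree $n-k_i$, a monic $f_i$ of degree $n$ with $f_i=g_ih_i$ (so $g_i$ is both a right and a left divisor of $f_i$), $\s_i$ an automorphism, and $\d_i$ a $\s_i$-derivation. These are precisely the conditions assembled in Proposition \ref{star 3} together with Subsection \ref{s,d codes}, so the verification is bookkeeping rather than mathematics; everything else follows formally from Proposition \ref{P above}, Corollary \ref{COR}, and Corollary \ref{dual 3}. I would therefore write the proof in essentially two sentences: "By \cite[Theorem 5.1]{BD}, $H_i$ is a generating matrix of $C_i^\perp$ for each $i=1,\dots,s$; now the result follows from Proposition \ref{P above} and Corollary \ref{dual 3}."
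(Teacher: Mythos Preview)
Your proposal is correct and follows essentially the same approach as the paper: you use \cite[Theorem 5.1]{BD} (the content recalled in Subsection \ref{s,d codes}) to identify each $H_i$ as a generating matrix of $C_i^\perp$, and then conclude via Corollary \ref{dual 3} (which already encapsulates Proposition \ref{P above}). The paper's own proof is precisely your suggested two-sentence version.
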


\begin{proof}
By the presentation in Subsection \ref{s,d codes}, $H_i$ is a generating matrix of $C_i^\perp$ for $i=1, \dots, s$. Now, apply Corollary \ref{dual 3} to get the conclusion.
\end{proof}

\begin{example}
Let $R$ be finite of characteristic 2, $\al,\be \in R$ with $\al^2+\al+1=0$, and $\s_1, \s_2$ ring automorphisms of $R$ with $\s_1(\al)=\al^2$ and $\s_2(\be)=\be$. We present several principal $\s_i$-codes of length 4 and use them to construct many matrix-product codes.\\
\indent \underline{Step 1:} Consider $f_1(X), g_1(X), h_1(X)\in R_{\s_1}$, with $f_1(X)=X^4+X^2+1$, $g_1(X)=X^2+\al$, and $h_1(X)=X^2+\al^2$. It can be checked that $$f_1(X)=g_1(X)h_1(X)=h_1(X)g_1(X).$$ Let $C_1$ and $C_2$ be the principal $\s_1$-codes of length 4 over $R$ generated, respectively, by $g_1$ and $h_1$. By Subsection \ref{s,d codes},
$$G_1=\left(\begin{array}{cccc} g_{0,1} &g_{1,1}& g_{2,1}&0\\0&\s_1(g_{0,1})&\s_1(g_{1,1})&\s_1(g_{2,1})\end{array}\right)=\left(\begin{array}{cccc} \al&0&1&0\\0&\al^2&0&1\end{array}\right)$$ is a generating matrix of $C_1$, and $$G_2=\left(\begin{array}{cccc} h_{0,1} &h_{1,1}&h_{2,1}&0\\ 0&\s_1(h_{0,1})&\s_1(h_{1,1})&\s_1(h_{2,1})\end{array}\right)=\left(\begin{array}{cccc} \al^2&0&1&0\\0&\al &0&1\end{array}\right)$$ is a generating matrix of $C_2$. On the other hand, $$H_1=\left(\begin{array}{cccc} h_{2,1}&h_{2,1}^{(1)}&h_{2,1}^{(2)}&h_{2,1}^{(3)}\\0&\s_1(h_{2,1})&h_{3,1}^{(2)}&h_{3,1}^{(3)}\end{array}\right)=\left(\begin{array}{cccc} 1&0&\al &0\\0&1&0&\al^2\end{array}\right)$$ is a generating matrix of $C_1^\perp$, and $$H_2=\left(\begin{array}{cccc} g_{2,1}&g_{2,1}^{(1)}&g_{2,1}^{(2)}&g_{2,1}^{(3)}\\0&\s_1(g_{2,1})&g_{3,1}^{(2)}&g_{3,1}^{(3)}\end{array}\right)=\left(\begin{array}{cccc} 1&0&\al^2 &0\\0&1&0&\al\end{array}\right)$$ is a generating matrix of $C_2^\perp$.
Notice that $G_1H_1^T=\left(\begin{array}{cc} 0&0\\0&0\end{array}\right)=G_2H_2^T$.\\
\indent \underline{Step 2:} Consider $f_2(X), g_2(X), h_2(X)\in R_{\s_2}$, with $f_2(X)=X^4+\be^4$, $g_2(X)=X+\be$, and $h_2(X)=X^3+\be X^2 +\be^2 X+\be^3$. It can be checked that $$f_2(X)=g_2(X)h_2(X)=h_2(X)g_2(X).$$ Let $C_3$ and $C_4$ be the principal $\s_2$-codes of length 4 over $R$ generated, respectively, by $g_2$ and $h_2$. By Subsection \ref{s,d codes},
$$G_3=\left(\begin{array}{cccc} g_{0,2} &g_{1,2}& 0 &0\\0&\s_2(g_{0,2})&\s_2(g_{1,2})&0\\ 0 & 0& \s_2^2(g_{0,2}) & \s_2^2(g_{1,2})   \end{array}\right)=\left(\begin{array}{cccc} \be&1&0&0\\0&\be&1&0\\ 0&0& \be & 1  \end{array}\right)$$ is a generating matrix of $C_3$, and $$G_4=\left(\begin{array}{cccc} h_{0,2} &h_{1,2}&h_{2,2}& h_{3,2}\end{array}\right)=\left(\begin{array}{cccc} \be^3&\be^2&\be&1 \end{array}\right)$$ is a generating matrix of $C_4$. On the other hand, $$H_3=\left(\begin{array}{cccc} h_{3,2}&h_{3,2}^{(1)}&h_{3,2}^{(2)}&h_{3,2}^{(3)}\end{array}\right)=\left(\begin{array}{cccc} 1&\be & \be^2 & \be^3\end{array}\right)$$ is a generating matrix of $C_3^\perp$, and $$H_4=\left(\begin{array}{cccc} g_{1,2}&g_{1,2}^{(1)}&g_{1,2}^{(2)}&g_{1,2}^{(3)}\\0&\s_2(g_{1,2})&g_{2,2}^{(2)}&g_{2,2}^{(3)}\\ 0&0& \s_2^2(g_1) & g_{3,2}^{(3)}\end{array}\right)=\left(\begin{array}{cccc} 1&\be &0&0\\0&1&\be &0\\ 0&0& 1& \be \end{array}\right)$$ is a generating matrix of $C_4^\perp$.
Notice that $G_3H_3^T=\left(\begin{array}{c} 0\\0\\0 \end{array}\right)$ and $G_4H_4^T=\left(\begin{array}{ccc} 0&0&0 \end{array}\right)$.\\
\indent \underline{Step 3:} Let $A=(a_{i,j})\in M_{2\times l}(R), B=(b_{i,j})\in M_{3\times l}(R), D=(d_{i,j})\in M_{4\times 3}(R)$ be all full-rank matrices. By Proposition \ref{star 1}, we can easily construct the generating matrices of many matrix-product codes out of different combinations of the above principal $\s_i$-codes such as $[g_i g_j]\,A$ and $[g_i h_j]\,A$, $[g_i h_j g_k]\,B$, $[h_i g_j h_k]\,B$, $[g_i h_j h_k g_p]\,D$, and $[g_i g_j g_k h_p]\,B$ for \linebreak $i,j,k,p=1,2$. For instance, $\left(\begin{array}{ccc} a_{1,1}G_1 &\dots& a_{1,l} G_1  \\ a_{2,1} G_4 &\dots & a_{2,l}G_4\end{array}\right)\in M_{3\times 4l}(R)$ is a generating matrix of $[g_1 h_2]\,A$, $\left(\begin{array}{ccc} b_{1,1} G_4 &\dots & b_{1,l}G_4\\ b_{2,1}G_3 &\dots & b_{2,l}G_3\\ b_{3,1}G_2 & \dots & b_{3,l}G_2\end{array}\right)\in M_{6\times 4l}(R)$ is a generating matrix of $[h_2 g_2 h_1]\,B$, and $\left(\begin{array}{ccc} d_{1,1} G_3 &\dots & d_{1,l}G_3\\ d_{2,1}G_4 &\dots & d_{2,l}G_4\\ d_{3,1} G_4 &\dots & d_{3,l}G_4\\ d_{4,1} G_1 &\dots & d_{4,l}G_1\end{array}\right)\in M_{7\times 4l}(R)$ is a generating matrix of $[g_2 h_2 h_2 g_1]\,D$. In a similar manner, we can construct generating matrices of different combinations of the codes $C_i$ and their dual codes for $i=1, 2,3,4$.\\
\indent \underline{Step 4:} Utlilizing Proposition \ref{P above}, we can give the generating matrices of the dual codes of all of the above matrix-product codes when the matrices $A$, $B$, and $D$ are square and non-singular. For instance, follwing Remark \ref{Z_20}, let $R$ be $\Z_{20}$ and $A=\left(\begin{array}{cc} 3&0\\0&7\end{array}\right)$. Then $A$ is non-singular and $(A^{-1})^T=\left(\begin{array}{cc} 7&0\\0&3\end{array}\right)$. As in Step 3, a generating matrix of $[g_1 g_2]\,A$ is $$G=\left(\begin{array}{cc} 3G_1&0\\0&7G_2\end{array}\right)\in M_{4\times 8}(R).$$ By Proposition \ref{P above}, a generating matrix of $([g_1 g_2]\,A)^\perp$ is
$$H=\left(\begin{array}{cc} 7H_1&0\\0&3H_2\end{array}\right)\in M_{4\times 8}(R).$$
Note that $$GH^T=\left(\begin{array}{cc} 21 G_1 H_1^T&0\\0&21 G_2H_2^T\end{array}\right)=\left(\begin{array}{cc} 0&0\\0&0\end{array}\right) \in M_{4\times 4}(R).$$
\end{example}
\vspace{0.5cm}

Besides the assumptions of Proposition \ref{star 3}, let us now assume further that, for every $j=1, \dots, s$, $\d_j=0$, $g_{0,j}\in U(R)$, $k=k_1=k_2=\dots=k_s$ with $n=2k$, and denote $R_{\s_j,0}$ by $R_{\s_j}$.

\begin{prop}\label{star 2} Keep the assumptions as above. Assume further that $A=(a_{ij})\in M_{s\times s}(R)$ is such that $AA^T=\mbox{Diag}(r_1, \dots, r_s)$ with $r_1, \dots, r_s\in U(R)$ and that, for every $j=1, \dots, s$, either of the following statements holds:
\begin{itemize}
\item[(1)] $g_j(X)$ is a right divisor in $R_{\s_j}$ of $X^n-a_j$ for some $a_j\in U(R)$, $C_j$ is the principal \linebreak $(X^n-a_j,\s_j)$-constacyclic code generated by $g_j(X)$, and $\s_j^{k_j}(h_{0,j}^{-1})h_j^*(X)=g_j(X)$, where \linebreak $g_j(X)h_j(X)=X^n-\s_j^{-k_j}(a_j)$.

\item[(2)] For any $l_j\in \{0,\dots, k_j\}$, $\sum_{i=0}^{l_j} \s_j^{k_j-1}(g_{i,j})g_{i+k_j-l_j,j}=0$.
\end{itemize}
Then, the matrix-product code $[g_1 \dots g_s]\,A$ is self-dual. 

\end{prop}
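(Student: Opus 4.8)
The plan is to reduce the self-duality of $[g_1 \dots g_s]\,A$ to the self-duality of each individual input code $C_j$, and then to obtain the latter from whichever of (1) or (2) holds. Observe first that, under the standing assumptions, each $C_j$ is free of rank $k$ (by \cite[Theorem 1]{BL}, see also \cite{BD}), $A$ is square, and the relation $AA^T=\mbox{Diag}(r_1,\dots,r_s)$ with all $r_i\in U(R)$ is exactly the hypothesis of Corollary \ref{self dual} (it also forces $\mbox{det}(A)^2=r_1\cdots r_s\in U(R)$, hence $\mbox{det}(A)\in U(R)$, so $A$ is non-singular and all hypotheses of Proposition \ref{star 3} are in force). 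Therefore Corollary \ref{self dual}(1) applies and gives that $[g_1 \dots g_s]\,A=[C_1 \dots C_s]\,A$ is self-dual if and only if $C_j$ is self-dual for every $j=1,\dots,s$. It thus suffices to fix $j$ and show that (1) or (2) implies $C_j=C_j^\perp$.

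The next step is the observation that, since $n=2k$, for the free code $C_j$ self-duality coincides with self-orthogonality. Indeed $C_j$ is free of rank $k$, so by Lemma \ref{free dual} its dual $C_j^\perp$ is free of rank $n-k=k$, whence $|C_j|=|R|^{k}=|C_j^\perp|$ because $R$ is finite; thus the inclusion $C_j\subseteq C_j^\perp$ already yields $C_j=C_j^\perp$. Moreover, writing $G_j$ for the generating matrix of $C_j$ exhibited in Subsection \ref{s,d codes} in the case $\d_j=0$, the inclusion $C_j\subseteq C_j^\perp$ is equivalent to $G_jG_j^T=0$, since the rows of $G_j$ generate $C_j$ and the Euclidean form pairs those rows through the entries of $G_jG_j^T$.

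It remains to treat the two hypotheses. Assume (2). I would compute $G_jG_j^T$ directly from the explicit shape of $G_j$, whose $a$-th row ($0\le a\le k-1$) carries $\s_j^{a}(g_{0,j}),\dots,\s_j^{a}(g_{n-k,j})$ in columns $a,a+1,\dots,a+(n-k)$. A routine calculation shows that the $(a,b)$-entry of $G_jG_j^T$ (say with $a\le b$ and $t=b-a$) is $\s_j^{a}$ applied to one fixed $R$-combination of products of the coefficients $g_{i,j}$ that depends on $t$ only; since $\s_j$ is a ring automorphism this entry vanishes iff that combination vanishes, so $G_jG_j^T=0$ is equivalent to the finite family of coefficient identities listed in (2). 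Hence (2) gives $G_jG_j^T=0$, i.e. $C_j$ is self-dual. Now assume (1). Here $C_j$ is the principal $(X^n-a_j,\s_j)$-constacyclic code generated by $g_j$, and the determination of duals recalled in Subsection \ref{s,d codes} (that is, \cite[Theorem 5.1]{BD}) shows that $C_j^\perp$ is again principal $(\s_j)$-constacyclic, generated by the monic polynomial $\s_j^{k}(h_{0,j}^{-1})h_j^{*}(X)$, the normalized skew-reciprocal of the check polynomial $h_j$ defined by $g_jh_j=X^n-\s_j^{-k}(a_j)$; by hypothesis (1) this generator equals $g_j(X)$, so $C_j^\perp=C_j$. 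In both cases $C_j$ is self-dual, and the first paragraph completes the proof.

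The step I expect to require the most care is case (2): one must track, without slippage, the powers of $\s_j$ and the column shifts produced by $G_jG_j^T$, and confirm that the resulting relations are exactly the identities of (2). Case (1), by contrast, is essentially a citation, the only subtlety being to invoke the duality result of \cite{BD} in the correct normalization, in particular with the twisted right-hand side $X^n-\s_j^{-k}(a_j)$ in the factorization $g_jh_j$.
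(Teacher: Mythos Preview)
Your approach is essentially the same as the paper's: reduce via Corollary~\ref{self dual}(1) to the self-duality of each input code $C_j$, and then deduce the latter from the hypotheses (1) or (2). The paper's proof simply cites \cite[Corollary 3.7]{BD} for the equivalence of (1), (2), and the self-duality of $C_j$, and then invokes Corollary~\ref{self dual} (it is mislabelled in the paper as Corollary~\ref{COR}, and ``(3)'' there should read ``(2)''). You instead unpack that citation: for (1) you use the explicit description of $C_j^\perp$ coming from \cite[Theorem 5.1]{BD}, and for (2) you sketch the direct computation of $G_jG_j^T$ together with the cardinality argument $|C_j|=|C_j^\perp|$ from Lemma~\ref{free dual}. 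This is a legitimate expansion of the same argument; your caveat about tracking the $\s_j$-powers in case~(2) is well placed, since matching the relations you obtain from $G_jG_j^T=0$ with the exact exponents appearing in condition~(2) requires care (and in fact amounts to re-deriving part of \cite[Corollary 3.7]{BD}).
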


\begin{proof} By \cite[Corollary 3.7]{BD}, the statements (1) and (3) are equivalent and, further, they are equivalent to the condition of $C_j$ being self-dual. Now, apply Corollary \ref{COR} to get that $[g_1 \dots g_s]\,A$ is self-dual. 
\end{proof}

\begin{example} \label{EX}

Let $R=\F_3\times \F_3$, $\s(x,y)=(y,x)$, and $\al=(2,2)$. Note that $\s$ is a ring automorphism of $R$ of order 2, $1=1_R=(1,1)$, and $\al$ is a unit in $R$ of order 2. Set $\displaystyle{h(X)=X^2+\al X+\al}$.\linebreak Then $h_0=\al= h_0^{-1}$ and $h^*(X)= \al X^2+\al X+ 1$. Let $g(X)=\s^2(h_0^{-1})h^*(X)$. So, \linebreak $g(X)= \al(\al X^2+\al X+ 1)=X^2+X+\al$. Using the multiplication rule in $R_\s$, we have
\begin{align*}
g(X)h(X)&= (X^2+X+\al)(X^2+\al X +\al)\\
        &= X^4 +\s(\al)X^3+\s(\al)X^2+X^3+\s(\al)X^2+\s(\al)X+\al X^2 +\al^2 X +\al^2\\
        &= X^4 + (\al+1) X^3 + (3\al)X^2 +(\al+\al^2)X +1\\
        &=X^4 +1\\
        &=X^4+1
\end{align*}
Thus, $g(X)$ satisfies condition (2) of Proposition \ref{star 2} and, hence, it generates a self-dual $(X^4+1,\s)$-constacyclic code $C$ of length 4 over $R$. It follows from the paragraph following Proposition \ref{star 1} that a generating matrix of $C$ is $M = \left(\begin{array}{cccc} \al & 1 & 1 & 0 \\ 0 & \al & 1 & 1 \end{array}\right).$ Given the matrix $A= \left( \begin{array}{cc} (1,0) & (0,1)  \\ (0,2) &(1,0)  \\\end{array} \right)\in M_{2\times 2}(R)$, which is orthogonal and, thus, quasi-orthogonal, we conclude from Propositions \ref{star 2} and \ref{star 1}, respectively, that the matrix-product code $[g \, g]\,A$ is self-dual with the following generating matrix
 $$G=   \left( \begin{array}{cc}
 (1,0) M & (0,1)M  \\
 (0,2) M & (1,0)M  \\
 \end{array} \right)=
   \left( \begin{array}{cccccccc}( 2,0) &(1,0) & (1,0) & (0,0)&  (0,2) &(0,1) & (0,1) & (0,0)  \\
(0,0) &(2,0) &(1,0) & (1,0)  & (0,0) &(0,2) &(0,1) & (0,1)  \\

  (0,1 )&(0,2) &(0,2) & (0,0)& (2,0) &(1,0) &(1,0) & (0,0)  \\

 (0,0) &(0,1) &(0,2) & (0,2)& (0,0) &(2,0) &(1,0) & (1,0)   \\
 \end{array} \right).$$
\end{example}

\section*{Acknowledgement}
A. Deajim would like to express his gratitude to King Khalid University for providing administrative and technical support. He would also like to thank the University Council and the Scientific Council of King Khalid University for approving a sabbatical leave request for the academic year 2018-2019, during which this article was prepared.

\end{document}